\newtheorem{theorem}{Theorem}
\newtheorem{lemma}[theorem]{Lemma}
\newtheorem{remark}[theorem]{Remark}
\newenvironment{proof}[1][Proof]{\textbf{#1.} }{\ \rule{0.5em}{0.5em}}
\title{\boldmath Bethe Ansatz and exact form factors of the $O(N)$ Gross Neveu-model}
\author[a]{Hrachya M. Babujian}
\author[b]{Angela Foerster}
\author[c]{and Michael Karowski}
\affiliation[a]{Yerevan Physics Institute,\\
Alikhanian Brothers 2, Yerevan, 375036 Armenia and\\
International Institute of Physics,
Universidade Federal do Rio Grande do Norte (UFRN),\\
59078-400 Natal-RN, Brazil}
\affiliation[b]{Instituto de F\'{\i}sica da UFRGS,\\
Av. Bento Gon\c{c}alves 9500, Porto Alegre, RS - Brazil}
\affiliation[c]{Institut f\"{u}r Theoretische Physik, FU-Berlin,\\
Arnimallee 14, 14195 Berlin, Germany}
\emailAdd{babujian@yerphi.am}
\emailAdd{angela@if.ufrgs.br}
\emailAdd{karowski@physik.fu-berlin.de}
\abstract{
We apply previous results on the $O(N)$ Bethe Ansatz \cite{BFK5,BFK6,BFK7} to
construct a general form factor formula for the $O(N)$ Gross-Neveu model. We
examine this formula for several operators, such as the energy momentum, the
spin-field and the current. We also compare these results with the $1/N$
expansion of this model and obtain full agreement. We discuss bound state form
factors, in particular for the three particle form factor of the field. In
addition for the two particle case we prove a recursion relation for the
K-functions of the higher level Bethe Ansatz.
\\[1cm]
\textsc{Keywords}: Exact S-Matrix, Form Factors, Bethe Ansatz,
Integrable Field Theories
}
\begin{document}

\maketitle

\section{Introduction}

The $O(N)~\sigma$- and Gross-Neveu (GN) models are integrable and
asymptotically free quantum field theories in 1+1 dimension. The S-matrices of
these two models correspond to two solutions of the Yang-Baxter equation
\cite{ZZ3,ZZ4}. In previous articles we constructed the $O(N)$ nested
off-shell Bethe Ansatz \cite{BFK5,BFK6} and applied this technique to
construct the exact form factors for the $O(N)~\sigma$ model \cite{BFK7}. Here
we extend this work and construct the form factors for the $O(N)$ Gross-Neveu
model for arbitrary number of fundamental particles (for the two-particle case
see \cite{KW}). The model exhibits a very rich bound state structure and kinks
(see e.g. \cite{KT1}), turning this study even more challenging.

Before we recall the S-matrix and all other details of this model we should
mention that the integrable structure present in 1+1 dimension is now becoming
relevant and actual in higher dimensional gauge theories under specific
circumstances. Remarkably, in the articles \cite{BSV1,BSV2,BSV3,BSV4} (see
also references therein) a non-perturbative formulation of planar scattering
in the $N=4$ Supersymmetric Yang-Mills theory (SYM) with the so called
polygonal Wilson loops was proposed and a new decomposition of the Wilson
loops in terms of the fundamental building blocks-Pentagon transitions was
introduced. These transitions are directly related to the dynamics of the
Gubser-Klebanov-Polyakov flux-tube \cite{GKP}, which can be computed exactly
by exploring the integrability. In addition, three axioms about the
transitions that single particles must satisfy were postulated and,
interestingly, it is possible to verify that these axioms correspond to some
deformations of the form factor equations in $1+1$- dimensional integrable
quantum field theories. Such exact and constructive developments in the $N=4$
SYM theory opens, indeed, large perspectives in the view of using the exact
integrability and the full machinery of the form factor program to get
physical insights, specially in the case of non-trivial symmetry groups, such
as $SU(N)$ and $O(N)$.

In this article we consider the $O(N)$-Gross-Neveu model for $N=$ even. We do
not use any Lagrangian to construct the model, nevertheless, we give the
following motivation. The $O(N)$-Gross-Neveu model describes the interaction
of $N/2$ Dirac (or $N$ Majorana) fermions defined by the
Lagrangian\footnote{The Lagrangian (\ref{LGN}) is invariant under $O(N)$
transformations of the vector of $N$ Majorana fermi fields $\psi_{\alpha
}^{(i)}~(\alpha=1,\dots,N/2),$~$i=1,2,$ where $\psi_{\alpha}=\psi_{\alpha
}^{(1)}+i\psi_{\alpha}^{(2)}$ \cite{ZZ4}.} \cite{GN}%
\begin{equation}
\mathcal{L}^{GN}=\sum_{\alpha=1}^{N/2}\bar{\psi}_{\alpha}i\gamma\partial
\psi_{\alpha}+\frac{1}{2}\,g^{2}\left(  \sum_{i=1}^{N/2}\bar{\psi}_{\alpha
}\psi_{\alpha}\right)  ^{2}\,. \label{LGN}%
\end{equation}
It is known from semi-classical calculations \cite{DHN} that there are bound
states of two fundamental fermions in the scalar and the anti-symmetric tensor
channel. Furthermore there are kinks such that the fundamental fermions are
kink-kink bound states. The bootstrap program does not use the Lagrangian, but
we are looking for an factorizing S-matrix of an $O(N)$-isovector $N$-plett of
self conjugate fundamental fermions. However, now we assume bound states in
the scalar and anti-symmetric tensor channel of two of them.

In this article we use the techniques of \cite{BFK5,BFK7} to construct the
form factors of the $O(N)$-Gross-Neveu model. We apply the general results to
compute exact form factors for the energy-momentum, the spin-field and the
current. The exact results are compared with the ones obtained in perturbation
theory using the $1/N$ expansion. The final aim of the form factor program is
to obtain explicit results for the correlation functions or Wightman functions
in the framework of 2-dimensional integrable QFTs. In \cite{KW,BKW} the
concept of generalized form factors was introduced and developed further by
Smirnov \cite{Sm}. We call the matrix elements of fields with many particle
states: \textquotedblleft generalized form factors\textquotedblright. Matrix
difference equations (the generalized Watson's equations) are solved by using
the \textquotedblleft off-shell Bethe Ansatz" \cite{BKZ2,BFKZ,BFK1,BFK5},
which was introduced in \cite{B3} to solve the Knizhnik-Zamolodchikov
equations. Other approaches to form factors in integrable quantum field
theories can be found in \cite{CM2,FMS,YZ,Lu0,Lu,Lu1,BL,LuZ,Or}. For articles
considering the form factor program for Bethe Ansatz solvable models with
nesting see also
\cite{Pozsgay:2012wu,Pakuliak:2015qga,Pakuliak:2015fma,1751-8121-48-43-435001}.

The general form factor formula in terms of an integral representation is the
main result of this paper. It solves the form factors equations. The matrix
element of a local operator $\mathcal{O}(x)$ for a state of $n$ particles of
kind $\alpha_{i}$ with rapidities $\theta_{i}$
\begin{equation}
\langle\,0\,|\,\mathcal{O}(x)\,|\,\theta_{1},\dots,\theta_{n}\,\rangle
_{\underline{\alpha}}^{in}=e^{-ix(p_{1}+\cdots+p_{n})}F_{\underline{\alpha}%
}^{\mathcal{O}}(\underline{\theta}) \label{2.8}%
\end{equation}
defines the generalized form factor $F_{\underline{\alpha}}^{\mathcal{O}%
}(\underline{\theta})$. Here we restrict $\alpha$ to the fundamental particles
of the model, which form an isovector $N$-plett of $O(N)$. Following \cite{KW}
we write%
\begin{equation}
F_{\underline{\alpha}}^{\mathcal{O}}(\underline{\theta})=K_{\underline{\alpha
}}^{\mathcal{O}}(\underline{\theta})\prod_{1\leq i<j\leq n}F(\theta_{ij})
\label{2.10}%
\end{equation}
where $F(\theta)$ is the minimal form factor function.

For the K-function we propose the same Ansatz as for the $\sigma$-model in
\cite{BFK7} in terms of a nested `off-shell' Bethe Ansatz
\begin{equation}
\fbox{$\rule[-0.2in]{0in}{0.5in}\displaystyle~K_{\underline{\alpha}%
}^{\mathcal{O}}(\underline{\theta})=N_{n}^{\mathcal{O}}\int_{\mathcal{C}%
_{\underline{\theta}}^{(1)}}dz_{1}\cdots\int_{\mathcal{C}_{\underline{\theta}%
}^{(m)}}dz_{m}\,\tilde{h}(\underline{\theta},\underline{z})\,p^{\mathcal{O}%
}(\underline{\theta},\underline{z})\,\tilde{\Psi}_{\underline{\alpha}%
}(\underline{\theta},\underline{z})$\thinspace.} \label{BA}%
\end{equation}
Here $\tilde{h}(\underline{\theta},\underline{z})$ is a scalar function which
depends only on the S-matrix. The scalar p-function $p^{\mathcal{O}%
}(\underline{\theta},\underline{z})$ which is in general a simple function of
$e^{\theta_{i}}$ and $e^{z_{j}}$ depends on the specific operator
$\mathcal{O}(x)$. This Ansatz transforms the complicated form factor matrix
equations (see (\ref{1.10})-(\ref{1.14}) below) to simple equations for the
scalar function $p^{\mathcal{O}}(\underline{\theta},\underline{z})$ (see also
\cite{BFK1}). The integration contour $\mathcal{C}_{\underline{\theta}}$ will
be specified in section \ref{s4}. The state $\tilde{\Psi}_{\underline{\alpha}%
}$ in (\ref{BA}) is a linear combination of the basic Bethe Ansatz co-vectors
(see \cite{BFK7} and (\ref{BS}))%
\begin{equation}
\tilde{\Psi}_{\underline{\alpha}}(\underline{\theta},\underline{z}%
)=L_{\underline{\mathring{\beta}}}(\underline{z})\,\tilde{\Phi}_{\underline
{\alpha}}^{\underline{\mathring{\beta}}}(\underline{\theta},\underline{z})\,.
\label{PSI}%
\end{equation}
The nested off-shell Bethe Ansatz is obtained by making for $L_{\underline
{\mathring{\beta}}}(\underline{z})$ an Ansatz like (\ref{BA}) and iterating
this procedure. In the present paper we mainly consider the case where
$\alpha$ correspond to the fundamental fermions of the $O(N)$-Gross-Neveu
model Lagrangian (\ref{LGN}). In forthcoming publications we will consider the
kinks \cite{BFK11} and we will discuss, in particular, the $O(6)$-Gross-Neveu
model in more detail \cite{BFK10}.

The `off-shell' Bethe Ansatz states are highest weight states if they satisfy
certain matrix difference equations (see for instance \cite{BFK5}). For $n$
particle states the $O(N)$ weights are%
\[
(w_{1},\dots,w_{N/2})=\left(  n-n_{1},\dots,n_{N/2-2}-n_{-}-n_{+},n_{-}%
-n_{+}\right)
\]
where $n_{1}=m,n_{2},\dots$ are the numbers of integrations in (\ref{BA}) and
the higher levels of the nesting. In particular $n_{\pm}$ are the numbers of
positive/negative chirality spinors. The various levels of the nested Bethe
Ansatz correspond to the nodes of the Dynkin diagram of the corresponding Lie
algebra (see for instance \cite{OWR,Re1,RW} and references therein). Here we
have $D_{N/2}$ for $N=$ even (see Fig. \ref{f2.2}). In \cite{BFK7} we used for
the $O(N)$~$\sigma$-model the group isomorphy $O(4)\simeq SU(2)\otimes SU(2)$
to start the nesting procedure with form factors of the $SU(2)$ chiral
Gross-Neveu model \cite{BFK4}. For the $O(N)$ Gross-Neveu model it is also
possible to use the group isomorphy $O(6)\simeq SU(4)$ to start the nesting
with form factors of the $SU(4)$ chiral Gross-Neveu model \cite{BFK4}. This
will be performed in detail in a separate paper \cite{BFK10}. For the on-shell
Bethe Ansatz for $N$ even see also \cite{dVK}.

Section \ref{s2} provides some known results and the notation for the $O(N)$
Gross-Neveu S-matrix, the bound states, etc. In Section \ref{s3} we recall the
general form factor equations and obtain the minimal form factor function. In
Section \ref{s4} we present the general exact form factors formula for the
$O(N)$-Gross-Neveu model and discuss the higher levels of the nested off-shell
Bethe Ansatz. In section \ref{s5} the general results are applied to some
examples. The more complicated proofs and calculations are delegated to the appendices.

\section{General settings\label{s2}}

\subsection{The $O(N)$-Gross-Neveu S-matrix}

We consider the fundamental particles of the Lagrangian (\ref{LGN}) which are
fermions and transform as the vector representation of $O(N)$. The structure
of the S-matrix is the same as that of the nonlinear $\sigma$-model
\cite{BFK7} however, here we are looking for a solution of the $O(N)$%
-Yang-Baxter equations with a bound state pole in the physical strip
$0<\operatorname{Im}\theta<\pi$. Therefore, here \textquotedblleft
minimality\textquotedblright\ implies that the S-matrix for the scattering of
two fundamental particles is of the form%
\begin{equation}
S(\theta,N)=\frac{\sinh\theta+i\sin\pi\nu}{\sinh\theta-i\sin\pi\nu}S^{\min
}(\theta),~~\text{with }\nu=\frac{2}{N-2}. \label{SGN}%
\end{equation}
This S-matrix\ was given by Zamolodchikov-Zamolodchikov \cite{ZZ4}. The first
factor in (\ref{SGN}) is the sine-Gordon breather-breather \cite{KT} amplitude
and $S^{\min}$ is the minimal $O(N)$ S-matrix which is the one of the
nonlinear $\sigma$-model (see e.g. \cite{BFK7}). The position of the pole is
dictated by the condition \cite{K1} that the pole has to be cancelled by a
zero in the amplitude $S_{+}^{\min}$. This condition\footnote{An additional
pole in $S_{+}^{GN}$ would contradict positivity in the Hilbert space (for
details see \cite{K1}).} fixes the pole and therefore the bound state mass
spectrum
\begin{equation}
m_{k}=2m\sin\tfrac{1}{2}k\pi\nu\quad(k=1,2,\dots,N/2-2)\,. \label{mk}%
\end{equation}
For each \textquotedblleft principal\textquotedblright\ quantum number $k$
there exist particles $b_{k}^{(r)}$ which are anti-symmetric tensors of rank
$r=k,\,k-2,\dots\geq0$, i.e.~they transform according to the $r$-th
fundamental representation of $O(N)$. These particles are bosons/fermions
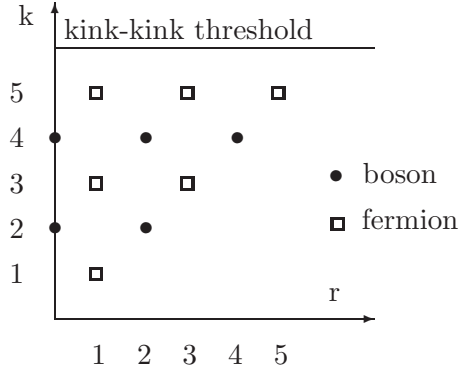
\begin{figure}[th]%
\[
\unitlength=6mm
\begin{picture}(10,8) \put(1,1){\vector(1,0){7}} \put(1,1){\vector(0,1){7}} \put(1,7){\line(1,0){7}} \put(1.2,7.2){kink-kink threshold} \put(7,1.4){r} \put(.2,7.5){k} \put(0,1.8){1} \put(0,2.8){2} \put(0,3.8){3} \put(0,4.8){4} \put(0,5.8){5} \put(1.8,0){1} \put(2.8,0){2} \put(3.8,0){3} \put(4.8,0){4} \put(5.8,0){5} \thicklines \put(2,2){\makebox(0,0){\framebox(.2,.2)}} \multiput(2,4)(2,0){2}{\makebox(0,0){\framebox(.2,.2)}} \multiput(2,6)(2,0){3}{\makebox(0,0){\framebox(.2,.2)}} \multiput(1,3)(2,0){2}{\makebox(0,0){$\bullet$}} \multiput(1,5)(2,0){3}{\makebox(0,0){$\bullet$}} \put(7,4){$\bullet$~~boson} \put(7.1,3){\framebox(.2,.2)~~~fermion} \end{picture}
\]
\caption{Particle spectrum of the $O(N)$-Gross-Neveu model for $N=14$}%
\label{f2.1}%
\end{figure}for $k$ even/odd. In addition there exist \textquotedblleft
kinks\textquotedblright\ of mass $m$ which transform as the two spinor
representations of $O(N)$ (with positive or negative isotopic chirality).

Note the intimate connection between the spectrum of the GN-model, figure
\ref{f2.1}, and the Dynkin diagram figure \ref{f2.2}. There exist
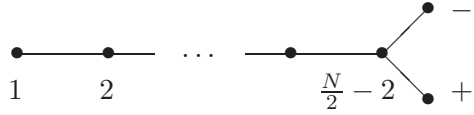
\begin{figure}[th]%
\[
\unitlength=6mm\begin{picture}(10,2) \put(0,1){\line(1,0){3}} \multiput(0,1)(2,0){2}{\makebox(0,0){$\bullet$}} \put(3.6,.94){$\dots$} \put(5,1){\line(1,0){3}} \multiput(6,1)(2,0){2}{\makebox(0,0){$\bullet$}} \put(8,1){\line(1,1){1}} \put(8,1){\line(1,-1){1}} \multiput(9,0)(0,2){2}{\makebox(0,0){$\bullet$}} \put(-.2,0){1} \put(1.8,0){2} \put(6.6,0){$\frac N2-2$} \put(9.5,0){$+$} \put(9.5,1.8){$-$} \end{picture}
\]
\caption{Dynkin diagram for $O(N)$}%
\label{f2.2}%
\end{figure}exclusively such one-particle states which transform according to
one of the fundamental (or trivial) representations of $O(N)$.

For the Bethe Ansatz it is convenient as in \cite{BFK7} to use instead of the
real basis\newline$|\alpha\rangle_{r}\,,$~$\left(  \alpha=1,2,\dots,N\right)
$ the complex basis $|1\rangle,|2\rangle,\dots,|\bar{2}\rangle,|\bar{1}%
\rangle$%
\[
\left.
\begin{array}
[c]{ccc}%
|\alpha\rangle & = & \frac{1}{\sqrt{2}}\left(  |2\alpha-1\rangle_{r}%
+i|2\alpha\rangle_{r}\right)  \\
|\bar{\alpha}\rangle & = & \frac{1}{\sqrt{2}}\left(  |2\alpha-1\rangle
_{r}-i|2\alpha\rangle_{r}\right)
\end{array}
\right\}  ~,~~\alpha=1,2,\dots,N/2\,.
\]
Then the S-matrix writes in terms of the components as%
\begin{equation}
S_{\alpha\beta}^{\delta\gamma}(\theta)=b(\theta)\delta_{\alpha}^{\gamma}%
\delta_{\beta}^{\delta}+c(\theta)\delta_{\alpha}^{\delta}\delta_{\beta
}^{\gamma}+d(\theta)\mathbf{C}^{\delta\gamma}\mathbf{C}_{\alpha\beta}\label{S}%
\end{equation}
with the rapidity difference $\theta$ of the particles and the
\textquotedblleft charge conjugation matrices\textquotedblright\ \
\begin{equation}
\mathbf{C}_{\alpha\beta}=\delta_{\alpha\bar{\beta}}~\text{and }\mathbf{C}%
^{\alpha\beta}=\delta^{\alpha\bar{\beta}}.\label{C}%
\end{equation}
The Yang-Baxter-, crossing- and unitarity-relation write as in \cite{BFK7}.
The highest weight amplitude is $a(\theta)=S_{+}(\theta)=b(\theta)+c(\theta)$%
\begin{align}
a(\theta) &  =\exp\left(  2\int_{0}^{\infty}\frac{dt}{t}\left(  \frac
{e^{-t\left(  1-\nu\right)  }-e^{-t}}{1+e^{-t}}\right)  \sinh t\frac{\theta
}{i\pi}\right)  \label{a}\\
&  =\frac{\Gamma\left(  1-\frac{1}{2\pi i}\theta\right)  \Gamma\left(
\frac{1}{2}+\frac{1}{2\pi i}\theta\right)  }{\Gamma\left(  1+\frac{1}{2\pi
i}\theta\right)  \Gamma\left(  \frac{1}{2}-\frac{1}{2\pi i}\theta\right)
}\frac{\Gamma\left(  1-\frac{1}{2}\nu+\frac{1}{2\pi i}\theta\right)
\Gamma\left(  \frac{1}{2}-\frac{1}{2}\nu-\frac{1}{2\pi i}\theta\right)
}{\Gamma\left(  1-\frac{1}{2}\nu-\frac{1}{2\pi i}\theta\right)  \Gamma\left(
\frac{1}{2}-\frac{1}{2}\nu+\frac{1}{2\pi i}\theta\right)  }\nonumber
\end{align}
with $\nu=2/(N-2)$. For later convenience we introduce
\begin{equation}
\tilde{S}_{\alpha\beta}^{\delta\gamma}(\theta)=S_{\alpha\beta}^{\delta\gamma
}(\theta)/S_{+}(\theta)=\tilde{b}(\theta)\delta_{\alpha}^{\gamma}\delta
_{\beta}^{\delta}+\tilde{c}(\theta)\delta_{\alpha}^{\delta}\delta_{\beta
}^{\gamma}+\tilde{d}(\theta)\mathbf{C}^{\delta\gamma}\mathbf{C}_{\alpha\beta
}\label{1.1}%
\end{equation}
with%
\begin{equation}
\tilde{b}(\theta)=\frac{\theta}{\theta-i\pi\nu},~\tilde{c}(\theta
)=\mathbf{-}\frac{i\pi\nu}{\theta-i\pi\nu},~\tilde{d}(\theta)=\mathbf{-}%
\frac{\theta}{\theta-i\pi\nu}\frac{i\pi\nu}{i\pi-\theta}\,.\label{1.2}%
\end{equation}
We will also need $\mathring{S}(z)$ the S-matrix for $O(N-2)$
\begin{equation}
\tilde{\mathring{S}}_{\alpha\beta}^{\delta\gamma}(\theta)=\mathring{S}%
_{\alpha\beta}^{\delta\gamma}(\theta)/\mathring{S}_{+}(\theta)=\tilde
{\mathring{b}}(\theta)\delta_{\alpha}^{\gamma}\delta_{\beta}^{\delta}%
+\tilde{\mathring{c}}(\theta)\delta_{\alpha}^{\delta}\delta_{\beta}^{\gamma
}+\tilde{\mathring{d}}(\theta)\mathbf{C}^{\delta\gamma}\mathbf{C}_{\alpha
\beta}\label{1.1a}%
\end{equation}
where $\nu$ is replaced by $\mathring{\nu}=2/(N-4)$.

\subsubsection{Bound states}

Following \cite{K1,KT1,BK} we write for the fundamental fermions $\alpha
,\beta,\beta^{\prime},\alpha^{\prime}$
\begin{equation}
i\operatorname*{Res}_{\theta=i\eta}\left(  \sigma S\right)  _{\alpha\beta
}^{\beta^{\prime}\alpha^{\prime}}(\theta)=\sum_{\gamma}\Gamma_{\gamma}%
^{\beta^{\prime}\alpha^{\prime}}\Gamma_{\alpha\beta}^{\gamma}:\quad%
\begin{array}
[c]{l}%
\unitlength3mm\begin{picture}(12,8) \put(0,3.5){$i\operatorname*{Res}$} \put(4,2){\line(1,2){2}} \put(6,2){\line(-1,2){2}} \put(3.4,.5){$\alpha$} \put(5.6,.5){$\beta$} \put(5.6,6.5){$\alpha'$} \put(3.4,6.5){$\beta'$} \put(7.7,3.7){$=$} \put(9.4,-.5){$\alpha$} \put(11.6,-.5){$\beta$} \put(11.6,7.5){$\alpha'$} \put(9.4,7.5){$\beta'$} \put(11,1){\oval(2,4)[t]} \put(11,3){\line(0,1){2}} \put(11,3){\makebox(0,0){$\bullet$}} \put(11,5){\makebox(0,0){$\bullet$}} \put(11,7){\oval(2,4)[b]} \end{picture}
\end{array}
\label{b1.40}%
\end{equation}
where $\sigma=-1$ is the statistics factor. The intertwiner $\Gamma
_{\alpha\beta}^{\gamma}$ and the dual one $\Gamma_{\gamma}^{\beta\alpha}$
satisfy the crossing relation
\begin{equation}
\Gamma_{\gamma}^{\beta\alpha}=\mathbf{C}_{\gamma\gamma^{\prime}}\Gamma
_{\alpha^{\prime}\beta^{\prime}}^{\gamma^{\prime}}\mathbf{C}^{\beta^{\prime
}\beta}\mathbf{C}^{\alpha^{\prime}\alpha}:\quad%
\begin{array}
[c]{l}%
\unitlength2.5mm\begin{picture}(14,8) \put(1,7){\oval(2,6)[b]} \put(1,1){\line(0,1){3}} \put(1,4){\makebox(0,0){$\bullet$}} \put(.6,0){$\gamma$} \put(-.4,7.5){$\beta$} \put(1.6,7.5){$\alpha$} \put(3,4){$=$} \put(6,1){\line(0,1){4}} \put(7.5,5){\oval(3,4)[t]} \put(9,5){\makebox(0,0){$\bullet$}} \put(9,4){\oval(2,2)[t]} \put(11,4){\oval(2,2)[b]} \put(11,4){\oval(6,6)[b]} \put(12,4){\line(0,1){3}} \put(14,4){\line(0,1){3}} \put(5.6,0){$\gamma$} \put(11.6,7.5){$\beta$} \put(13.6,7.5){$\alpha$} \end{picture}
\end{array}
\label{b1.43}%
\end{equation}
with the charge conjugation matrix $\mathbf{C}$ (\ref{C}). Here we have for
$\eta=\pi\nu$%
\begin{align*}
i\operatorname*{Res}_{\theta=i\pi\nu}\left(  \sigma S\right)  _{\alpha\beta
}^{\beta^{\prime}\alpha^{\prime}}(\theta)  &  =-i\operatorname*{Res}%
_{\theta=i\pi\nu}\frac{\sinh\theta+i\sin\pi\nu}{\sinh\theta-i\sin\pi\nu
}\left(  S^{\min}\right)  _{\alpha\beta}^{\beta^{\prime}\alpha^{\prime}%
}(\theta)\\
&  =2\tan\pi\nu\,\left(  S^{\min}\right)  _{\alpha\beta}^{\beta^{\prime}%
\alpha^{\prime}}(i\pi\nu)\,.
\end{align*}

\section{Generalized form factors}

\paragraph{Form factor equations}

\label{s3}

The form factor equations for the $O(N)$ Gross-Neveu-model are similar to the
ones of the $O(N)~\sigma$-model in \cite{BFK7}. However, here there are
additional statistics factors. The $F_{\underline{\alpha}}^{\mathcal{O}%
}(\underline{\theta})$ defined by (\ref{2.8}) are considered as the components
of a co-vector valued function $F_{1\dots n}^{\mathcal{O}}(\underline{\theta
})$ which satisfies:

\begin{itemize}
\item[(i)] Watson's equation
\begin{equation}
F_{\dots ij\dots}^{\mathcal{O}}(\dots,\theta_{i},\theta_{j},\dots)=F_{\dots
ji\dots}^{\mathcal{O}}(\dots,\theta_{j},\theta_{i},\dots)\,\left(  \sigma
S\right)  _{ij}(\theta_{ij})\label{1.10}%
\end{equation}
with $\theta_{ij}=\theta_{i}-\theta_{j}$ and $\sigma_{ij}=-1$ for fermions.

\item[(ii)] Crossing equation
\begin{multline}
^{~\text{out,}\bar{1}}\langle\,\theta_{1}\,|\,\mathcal{O}(0)\,|\,\theta
_{2},\dots,\theta_{n}\,\rangle_{2\dots n}^{\text{in,conn.}}\\
=F_{1\ldots n}^{\mathcal{O}}(\theta_{1}+i\pi,\theta_{2},\dots,\theta
_{n})\sigma_{1}^{\mathcal{O}}\mathbf{C}^{\bar{1}1}=F_{2\ldots n1}%
^{\mathcal{O}}(\theta_{2},\dots,\theta_{n},\theta_{1}-i\pi)\mathbf{C}%
^{1\bar{1}}\label{1.12}%
\end{multline}
with the charge conjugation matrix $\mathbf{C}^{\bar{1}1}$ and the statistics
factor $\sigma_{1}^{\mathcal{O}}$ of the operator $\mathcal{O}$ with respect
to the particle $1$.

\item[(iii)] Recursion equation
\begin{equation}
\operatorname*{Res}_{\theta_{12}=i\pi}F_{1\dots n}^{\mathcal{O}}(\theta
_{1},\dots,\theta_{n})=2i\,\mathbf{C}_{12}\,F_{3\dots n}^{\mathcal{O}}%
(\theta_{3},\dots,\theta_{n})\left(  \mathbf{1}-\sigma_{2}^{\mathcal{O}%
}\left(  \sigma S\right)  _{2n}\dots\left(  \sigma S\right)  _{23}\right)
\,,\label{1.14}%
\end{equation}

\item[(iv)] Because there are bound states in the model the function
$F_{\underline{\alpha}}^{\mathcal{O}}({\underline{\theta}})$ has additional
poles. If for instance the particles $1$ and $2$ form a bound state $(12)$,
there is a pole at $\theta_{12}=i\eta,~(0<\eta<\pi)$ such that
\begin{equation}
\operatorname*{Res}_{\theta_{12}=i\eta}F_{12\dots n}^{\mathcal{O}}(\theta
_{1},\theta_{2},\dots,\theta_{n})\,=F_{(12)\dots n}^{\mathcal{O}}%
(\theta_{(12)},\dots,\theta_{n})\,\sqrt{2}\Gamma_{12}^{(12)}\label{1.16}%
\end{equation}
where the bound state intertwiner $\Gamma_{12}^{(12)}$ of (\ref{b1.40}) and
the values of $\theta_{1},~\theta_{2},~\theta_{(12)}$ and $\eta$ are given in
\cite{K1,KT1,BK}.

\item[(v)] Lorentz covariance
\begin{equation}
F_{1\dots n}^{\mathcal{O}}(\theta_{1}+\mu,\dots,\theta_{n}+\mu)=e^{s\mu
}\,F_{1\dots n}^{\mathcal{O}}(\theta_{1},\dots,\theta_{n})\label{1.18}%
\end{equation}
if the local operator transforms under Lorentz transformations as
$\mathcal{O}\rightarrow e^{s\mu}\mathcal{O}$ where $s$ is the
\textquotedblleft spin\textquotedblright\ of $\mathcal{O}$.
\end{itemize}

The statistics factors in (ii)\textbf{ }and (iii) are not arbitrary, but
consistency and crossing implies that both are the same and that the for
anti-particle $\sigma_{1}^{\mathcal{O}}\sigma_{\bar{1}}^{\mathcal{O}}=1$ holds
(see also \cite{BFK1}). In \cite{BFKZ,BK} was shown that the form factor
equations follow from general LSZ assumptions and \textquotedblleft maximal
analyticity\textquotedblright.

\paragraph{Minimal form factors:}

The solutions of Watson's and the crossing equations (i) and (ii) for two
particles
\begin{align}
F\left(  \theta\right)   &  =S\left(  \theta\right)  F\left(  -\theta\right)
\label{watson}\\
F\left(  i\pi-\theta\right)   &  =F\left(  i\pi+\theta\right)  \nonumber
\end{align}
with no poles in the physical strip $0\leq\operatorname{Im}\theta\leq\pi$ and
at most a simple zero at $\theta=0$ are the minimal form factors \cite{KW}%
\begin{align}
F_{+}^{\min}\left(  \theta\right)   &  =\exp\int_{0}^{\infty}\frac{dt}{t\sinh
t}\frac{e^{-t\left(  1-\nu\right)  }-e^{-t}}{1+e^{-t}}\left(  1-\cosh t\left(
1-\frac{\theta}{i\pi}\right)  \right)  \label{F+}\\
F_{-}^{\min}\left(  \theta\right)   &  =\frac{\cosh\frac{1}{2}\left(
i\pi-\theta\right)  \Gamma^{2}\left(  \frac{1}{2}+\frac{1}{2}\nu\right)
}{\Gamma\left(  1+\frac{1}{2}\nu-\frac{1}{2\pi i}\theta\right)  \Gamma\left(
\frac{1}{2}\nu+\frac{1}{2\pi i}\theta\right)  }F_{+}^{\min}\left(
\theta\right)  \label{F-}\\
F_{0}^{\min}\left(  \theta\right)   &  =\frac{2\tanh\frac{1}{2}\left(
i\pi-\theta\right)  }{i\pi-\theta}F_{-}^{\min}\left(  \theta\right)
\,.\label{F0}%
\end{align}
They belong to the S-matrix eigenvalues $S_{\pm}=b\pm c$ and $S_{0}=b+c+Nd$
(see (\ref{S})). For the construction of the off-shell Bethe Ansatz the
minimal solution of the form factor equation (\ref{1.10}) for the highest
weight eigenvalue of the $O(N)$ S-matrix
\begin{equation}
F\left(  \theta\right)  =\sigma S_{+}\left(  \theta\right)  F\left(
-\theta\right)  =-a(\theta)F\left(  -\theta\right)  \label{Fmin0}%
\end{equation}
is essential. We take the solution\footnote{The minus sign in (\ref{Fmin0})
and the factor $\cosh\frac{1}{2}\left(  i\pi-\theta\right)  $ is due to
fermionic statistics of the fundamental particles (see also eq. 4.14 of
\cite{BFKZ}).}%
\begin{align}
F\left(  \theta\right)   &  =c\cosh\frac{1}{2}\left(  i\pi-\theta\right)
\,F_{+}^{\min}\left(  \theta\right)  \label{Fmin}\\
&  =c\exp\left(  \int_{0}^{\infty}\frac{dt}{t\sinh t}\frac{1+e^{-t\left(
1-\nu\right)  }}{1+e^{-t}}\left(  1-\cosh t\left(  1-\frac{\theta}{i\pi
}\right)  \right)  \right)  \nonumber
\end{align}
or\footnote{Private communication: Karol K. Kozlowski pointed out to one of
the authors (M.K.), that the minimal form factors may be expressed in terms of
Barnes G-function.}%
\[
F\left(  \theta\right)  =\frac{G\left(  \frac{1}{2}\frac{\theta}{i\pi}\right)
G\left(  1-\frac{1}{2}\frac{\theta}{i\pi}\right)  }{G\left(  \frac{1}{2}%
+\frac{1}{2}\frac{\theta}{i\pi}\right)  G\left(  \frac{3}{2}-\frac{1}{2}%
\frac{\theta}{i\pi}\right)  }\frac{G\left(  \frac{1}{2}-\frac{1}{2}\nu
+\frac{1}{2}\frac{\theta}{i\pi}\right)  G\left(  \frac{3}{2}-\frac{1}{2}%
\nu-\frac{1}{2}\frac{\theta}{i\pi}\right)  }{G\left(  1-\frac{1}{2}\nu
+\frac{1}{2}\frac{\theta}{i\pi}\right)  G\left(  2-\frac{1}{2}\nu-\frac{1}%
{2}\frac{\theta}{i\pi}\right)  }%
\]
where $G\left(  z\right)  $ is Barnes G-function, which satisfies (see e.g.
\cite{Wo})%
\[
G\left(  1+z\right)  =\Gamma\left(  z\right)  G\left(  z\right)  \,.
\]
For convenience we have introduced the constant $c$ (see (\ref{FF}))
\begin{equation}
c=G^{2}\left(  \tfrac{1}{2}\right)  G^{2}\left(  1-\tfrac{1}{2}\nu\right)
G^{-2}\left(  \tfrac{3}{2}-\tfrac{1}{2}\nu\right)  \,.\label{c}%
\end{equation}
The full 2-particle form factors are%
\begin{equation}
F_{+,-,0}\left(  \theta\right)  =\frac{-\cos^{2}\frac{1}{2}\pi\nu}{\sinh
\frac{1}{2}\left(  \theta-i\pi\nu\right)  \sinh\frac{1}{2}\left(  \theta
+i\pi\nu\right)  }F_{+,-,0}^{\min}\left(  \theta\right)  \,.\label{Ffull}%
\end{equation}
They are non-minimal solutions of (\ref{watson}) containing the bound state
pole at $\theta=i\pi\nu$ (see (2.16) of \cite{KW}).

\section{$O(N)$ form factors and Bethe Ansatz}

\label{s4}

\subsection{The fundamental Theorem}

Following \cite{KW} we write the general form factor $F_{1\dots n}%
^{\mathcal{O}}(\underline{\theta})$ for n-fundamental particles as
(\ref{2.10}) where $F(\theta)$ is the minimal form factor function
(\ref{Fmin}). The K-function $K_{1\dots n}^{\mathcal{O}}(\underline{\theta})$
is determined by the form factor equations (i) - (v). We propose the
K-function in terms of a nested `off-shell' Bethe Ansatz (\ref{BA}) as a
multiple contour integral.

The basic Bethe Ansatz co-vectors in (\ref{PSI}) are defined as (for more
details see \cite{BFK5,BFK7})%
\begin{equation}
\tilde{\Phi}_{\underline{\alpha}}^{\underline{\mathring{\beta}}}%
(\underline{\theta},\underline{z})=\left(  \Pi_{\underline{\beta}}%
^{\underline{\mathring{\beta}}}(\underline{z})\Omega\tilde{T}_{1}^{\beta_{m}%
}(\underline{\theta},z_{m})\dots\tilde{T}_{1}^{\beta_{1}}(\underline{\theta
},z_{1})\right)  _{\underline{\alpha}}\,.\label{BS}%
\end{equation}
The matrix $\Pi_{\underline{\beta}}^{\underline{\mathring{\beta}}}%
(\underline{z})$ intertwines\footnote{This matrix $\Pi$ is trivial for the
$SU(N)$ Bethe Ansatz because the $SU(N)$ S-matrix does not depend on $N$ for a
suitable normalization and parametrization.} between the S-matrices $S$ of
$O(N)$ and $\mathring{S}$ of $O(N-2)$%
\begin{equation}
\tilde{\mathring{S}}_{ij}(z_{ij}\mathring{\nu}/\nu)\Pi_{\dots ij\dots
}(\underline{z})=\Pi_{\dots ji\dots}(\underline{z})\tilde{S}_{ij}%
(z_{ij})\,.\label{PiS}%
\end{equation}
The Bethe Ansatz co-vectors (\ref{BS}) are generalizations of vectors
introduced by Tarasov \cite{Tar} for the Korepin-Izergin model. Below we will
use the following relations for special components of $\Pi$ (for more details
see \cite{BFK5,BFK6,BFK7})%
\begin{equation}
\Pi_{\underline{\beta}}^{\underline{\mathring{\beta}}}=\left\{
\begin{array}
[c]{lll}%
0 & \text{for} & \beta_{1}=1,~\text{or }\beta_{m}=\bar{1}\\
\delta_{\beta_{1}}^{\mathring{\beta}_{1}}\,\Pi_{\beta_{2}\dots\beta_{m}%
}^{\mathring{\beta}_{2}\dots\mathring{\beta}_{m}} & \text{for} & \beta_{1}%
\neq\bar{1}\\[1mm]%
\Pi_{\beta_{1}\dots\beta_{m-1}}^{\mathring{\beta}_{1}\dots\mathring{\beta
}_{m-1}}\,\delta_{\beta_{m}}^{\mathring{\beta}_{m}} & \text{for} & \beta
_{m}\neq1\,.
\end{array}
\right.  \label{Pi}%
\end{equation}

The scalar function $\tilde{h}(\underline{\theta},\underline{z})$ in
(\ref{BA}) depends only on the S-matrix and not on the specific operator
$\mathcal{O}(x)$%
\begin{equation}
\tilde{h}(\underline{\theta},\underline{z})=\prod_{i=1}^{n}\prod_{j=1}%
^{m}\tilde{\phi}(\theta_{i}-z_{j})\prod_{1\leq i<j\leq m}\tau(z_{i}-z_{j})\,.
\label{h}%
\end{equation}

\begin{figure}[h]
$%
\begin{array}
[c]{ccc}%
\fbox{\unitlength2.6mm\begin{picture}(26,15)(0,2) \thicklines \def\ff#1{ \put(-.05,13.2){$_{\bullet~\theta_{#1}+i\pi(4-\nu)}$} \put(-.05,9.2){$_{\bullet~\theta_{#1}+i\pi(2-\nu)}$} \put(.0,2.2){$_{\bullet~\theta_{#1}-2\pi i}$} \put(.25,3.9){${\hbox{\circle{.3}}}~_{\theta_{#1}-i\pi }$} \put(.25,5){${\hbox{\circle{.3}}}~_{\theta_{#1}-i\pi\nu}$} \put(0,6.2){$_{\bullet~\theta_{#1}}$} \put(.25,5){\hbox{\circle{1.15}}} \put(.7,4.5){\vector(1,0){0}} \put(.2,14.5){\oval(.8,12)[b]} \put(-.2,11){\vector(0,-1){0}} } \put(1,0){\ff{n}} \put(8,3){\dots} \put(12,0){\ff{2}} \put(20,2){\ff{1}} \end{picture}~~}%
~ &  &
\fbox{\unitlength2.6mm\begin{picture}(26,15)(0,-3) \thicklines \def\ff#1{ \put(-.27,9.2){$_{\bullet~\theta_{#1}+i\pi(2-\nu)}$} \put(0.2,-1.8){$_{\bullet~\theta_{#1}-4\pi i}$} \put(0.16,2.2){$_{\bullet~\theta_{#1}-2\pi i}$} \put(.28,4){${\hbox{\circle{.3}}}~_{\theta_{#1}-i\pi }$} \put(.28,5.1){${\hbox{\circle{.3}}}~_{\theta_{#1}-i\pi\nu}$} \put(.16,6.2){$_{\bullet~\theta_{#1}}$} \put(.4,6.25){\hbox{\circle{1.15}}} \put(.85,6.8){\vector(1,0){0}} \put(.4,-3){\oval(.8,12)[t]} \put(-.04,0){\vector(0,1){0}} } \put(1,0){\ff{n}} \put(8,3){\dots} \put(12,0){\ff{2}} \put(20,2){\ff{1}} \end{picture}~~}%
\\[1mm]%
\text{~(odd)} &  & \text{(even)}%
\end{array}
$\caption{The integration contours $\mathcal{C}_{\underline{\theta}}^{(o)}$
and $\mathcal{C}_{\underline{\theta}}^{(e)}$. The bullets refer to poles of
the integrand resulting from $\tilde{\phi}(\theta_{i}-z_{j})$ and the small
open circles refer to poles originating from $\tilde{S}(\theta_{i}-z_{j})$.}%
\label{f5.1}%
\end{figure}The functions $\tilde{\phi}$ and $\tau$ satisfy the shift
equations
\begin{align}
\tilde{\phi}(\theta-2\pi i)  &  =-\tilde{b}(\theta)\tilde{\phi}(\theta
)\label{shiftphi}\\
\tau(z-2\pi i)/\tilde{b}(2\pi i-z)  &  =\tau(z)/\tilde{b}(z) \label{shifttau}%
\end{align}
which are related to the form factor equation (ii) or (\ref{1.12})
\cite{BFK5,BFK6,BFK7}. Here for the $O(N)$ Gross-Neveu model
\begin{equation}
\tau(z)=\frac{1}{\tilde{\phi}(-z)\tilde{\phi}(z)} \label{2.19}%
\end{equation}
where $\,\tilde{\phi}(\theta)$ is\footnote{This is in contrast to the $\sigma
$-model case where the $\tilde{\phi}$-functions depend on whether $j$ in
(\ref{h}) is even or odd.}%
\begin{equation}
\tilde{\phi}(\theta)=\Gamma\left(  1-\frac{1}{2}\nu+\frac{1}{2\pi i}%
\theta\right)  \Gamma\left(  -\frac{1}{2\pi i}\theta\right)  . \label{phi}%
\end{equation}
The form factor equation (iii) or (\ref{1.14}) (as will be discussed in
appendix \ref{sd}) requires that%
\begin{equation}
F(\theta)F(\theta+i\pi)\tilde{\phi}(-\theta-i\pi+i\pi\nu)\tilde{\phi}%
(-\theta)=1. \label{FF}%
\end{equation}
The function (\ref{phi}) satisfies this relation. Notice that the equations
(\ref{phi}) and (\ref{FF}) also determine the normalization constant $c$ in
(\ref{Fmin}) and (\ref{c}).

Similar as in \cite{BFK7} the integration contours $\mathcal{C}_{\underline
{\theta}}^{(j)}$ in (\ref{BA}) depend on whether $j$ is even or odd, they are
depicted in Fig.~\ref{f5.1}. The Ansatz (\ref{2.10}) and (\ref{BA}) transforms
the matrix equations (i) - (v) (see (\ref{1.12})-(\ref{1.18})) into much
simpler scalar equations for the scalar p-function $p^{\mathcal{O}}%
(\underline{\theta},\underline{z})$. This function depends on the specific
operator $\mathcal{O}(x)$ and is in general a simple function of
$e^{\theta_{i}}$ and $e^{z_{j}}.$

\begin{theorem}
\label{TN}Assume that:

\begin{enumerate}
\item The p-function $p^{\mathcal{O}}(\underline{\theta},\underline{z})$
satisfies the equations
\begin{equation}
\left.
\begin{array}
[c]{ll}%
(\mathrm{i}^{\prime}) & p^{\mathcal{O}}(\underline{\theta},\underline
{z})~\text{is symmetric under }\theta_{i}\leftrightarrow\theta_{j}\\[2mm]%
(\mathrm{ii}_{1}^{\prime}) & p^{\mathcal{O}}(\underline{\theta},\underline
{z})=\sigma^{\mathcal{O}}(-1)^{m}p^{\mathcal{O}}(\theta_{1}+2\pi i,\theta
_{2},\dots,\underline{z})\\[1mm]%
(\mathrm{ii}_{2}^{\prime}) & p^{\mathcal{O}}(\underline{\theta},\underline
{z})=(-1)^{n}p^{\mathcal{O}}(\underline{\theta},z_{1}+2\pi i,z_{2},\dots)\\
(\mathrm{iii}^{\prime}) & p^{\mathcal{O}}(\underline{\theta},\underline
{z})=\,p^{\mathcal{O}}(\underline{\check{\theta}},\underline{\check{z}})
\end{array}
\right\}  \label{p}%
\end{equation}
where in $(\mathrm{iii}^{\prime})$ $\theta_{12}=i\pi,~$ $z_{1}=\theta_{1}%
-i\pi\nu$ and $z_{2}=\theta_{2}$. The short notations $\underline
{\check{\theta}}=(\theta_{3},\dots,\theta_{n})$ and $\underline{\check{z}%
}=(z_{3},\dots,z_{m})$ are used.

\item The higher level function $L_{\underline{\beta}}(\underline{z})$ in
(\ref{PSI}) satisfies $(\mathrm{i})^{(k)}$ - $(\mathrm{iii})^{(k)}$ of
(\ref{ik}) -- (\ref{iiik}) for $k=1$.

\item A suitable choice of the normalization constants in (\ref{BA}).\emph{ }
\end{enumerate}

\noindent Then the co-vector valued function $F_{\underline{\alpha}%
}(\underline{\theta})$ given by the Ansatz (\ref{2.10}) and the integral
representation (\ref{BA}) satisfies the form factor equations $(\mathrm{i})$
-- $(\mathrm{v})$ of (\ref{1.10}) -- (\ref{1.18}).
\end{theorem}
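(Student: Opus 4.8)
The plan is to substitute the Ansatz (\ref{2.10}), (\ref{BA}) into each of the form factor equations (i)--(v) and to show that the tensorial matrix identities collapse to the scalar conditions (\ref{p}) on $p^{\mathcal{O}}$, the level-one conditions on $L$, and a consistent choice of the $N_n^{\mathcal{O}}$. The entire construction is engineered so that the tensorial content resides in the Bethe co-vector $\tilde{\Phi}_{\underline{\alpha}}^{\underline{\mathring{\beta}}}$ of (\ref{BS}), the scalar kinematic content in $\tilde{h}(\underline{\theta},\underline{z})$ of (\ref{h}) together with $\prod F(\theta_{ij})$, and the operator dependence in $p^{\mathcal{O}}$. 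The key inputs I will use are the Yang-Baxter exchange relations of the monodromy operators $\tilde{T}$ that build $\tilde{\Phi}$, the intertwining property (\ref{PiS}) and the component relations (\ref{Pi}) of $\Pi$, the shift equations (\ref{shiftphi}), (\ref{shifttau}), the highest-weight relation (\ref{Fmin0}), and the constraint (\ref{FF}).

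For Watson's equation (i) I would exchange $\theta_i\leftrightarrow\theta_j$ inside (\ref{BA}). Reordering the two monodromy factors makes $\tilde{\Phi}$ produce $\tilde{S}_{ij}(\theta_{ij})=S_{ij}/S_+$ on the physical indices, while the minimal product responds through $F(\theta_{ij})=\sigma S_+(\theta_{ij})F(-\theta_{ij})$ of (\ref{Fmin0}), supplying exactly the scalar factor $\sigma S_+$; together these reconstruct $(\sigma S)_{ij}$. What remains is invariance of the integral under $\theta_i\leftrightarrow\theta_j$, which is precisely $(\mathrm{i}')$. Lorentz covariance (v) follows from the substitution $z_j\to z_j+\mu$: since $\tilde{h}$ and $\tilde{\Phi}$ depend only on rapidity differences and the contours of Fig.~\ref{f5.1} are translation invariant, the whole $\mu$-dependence reduces to the homogeneity degree of $p^{\mathcal{O}}$ in the variables $e^{\theta_i},e^{z_j}$, which yields the factor $e^{s\mu}$.

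The crossing equation (ii) I would treat in two stages. First the periodicity residual: continuing $\theta_1\to\theta_1+2\pi i$ and using the shift equations (\ref{shiftphi}), (\ref{shifttau}) together with the quasi-periodicities $(\mathrm{ii}_1')$ and $(\mathrm{ii}_2')$, one finds that $F$ reproduces itself up to the statistics factor $\sigma^{\mathcal{O}}$. Second, the genuine single-$i\pi$ crossing: continuing $\theta_1\to\theta_1+i\pi$ and deforming the contours of Fig.~\ref{f5.1}, I would collect the contributions that convert one physical leg into the charge-conjugate bra. Here the component structure (\ref{Pi}) of $\Pi$ and the charge conjugation matrix $\mathbf{C}$ of (\ref{C}) identify the connected annihilation term on the right-hand side of (\ref{1.12}), while (\ref{FF}) guarantees that the residual scalar factors from $F\cdot F$ and the two $\tilde{\phi}$'s cancel to unity.

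The recursion (iii) and the bound-state relation (iv) are the heart of the matter and the main obstacle, which is why the detailed argument is relegated to appendix \ref{sd}. For (iii) I would compute $\operatorname*{Res}_{\theta_{12}=i\pi}$ of (\ref{BA}): the simple pole arises from a pinching of the inner contours, with $z_1$ and $z_2$ trapped at the bullet/open-circle poles of $\tilde{\phi}$ and $\tilde{S}$ located exactly at $z_1=\theta_1-i\pi\nu$ and $z_2=\theta_2$, i.e.\ the kinematic point of $(\mathrm{iii}')$. At the residue the relations (\ref{Pi}) collapse $\tilde{\Phi}$ to the $(n-2)$-particle co-vector times $\mathbf{C}_{12}$, carrying particle $2$ through the remaining legs generates the string $\mathbf{1}-\sigma_2^{\mathcal{O}}(\sigma S)_{2n}\cdots(\sigma S)_{23}$, and $(\mathrm{iii}')$ matches the surviving p-function to the lower-level one. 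Closing this recursion forces the level-one function $L$ to obey its own equations $(\mathrm{i})^{(k)}$--$(\mathrm{iii})^{(k)}$ (hypothesis 2), so that the induction on the nesting goes through; the bound-state pole (iv) at $\theta_{12}=i\pi\nu$ is extracted analogously from the open-circle poles and factorizes through the intertwiner $\Gamma_{12}^{(12)}$ of (\ref{b1.40}). The delicate points will be controlling the pinching, excluding spurious poles during the contour deformations, and checking that the normalizations $N_n^{\mathcal{O}}$ can be fixed consistently at every level so that (\ref{1.14}) and (\ref{1.16}) hold with the stated coefficients.
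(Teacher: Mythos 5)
Your global strategy coincides with the paper's: push the Ansatz (\ref{2.10}), (\ref{BA}) through each equation, let $\tilde{\Phi}$ carry the tensor structure, and let everything else collapse onto the scalar conditions (\ref{p}). Your treatment of (i), (ii) and (v) is in substance what the paper does (it quotes the Jackson-type-integral results of \cite{BFK5} for (i) and (ii), with $(\mathrm{ii}_1^{\prime})$ supplying the statistics factor $\sigma_1^{\mathcal{O}}$, and your homogeneity argument for (v) is the standard one). The genuine gap is in your proof of the recursion relation (iii). The residue at $\theta_{12}=i\pi$ is \emph{not} produced by the single pinching you describe: for each even-$j$ integration the contour of Fig.~\ref{f5.1} is pinched at two distinct points, (1) $z_j=\theta_2\approx\theta_1-i\pi$ and (2) $z_j=\theta_1-2\pi i\approx\theta_2-i\pi$, so that $\operatorname*{Res}=\operatorname*{Res}^{(1)}+\operatorname*{Res}^{(2)}$. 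The pinching you exhibit (the kinematic point of $(\mathrm{iii}^{\prime})$, the one actually computed in appendix \ref{se}) yields only
\begin{equation*}
\operatorname*{Res}_{\theta_{12}=i\pi}^{(1)}K_{1\dots n}(\underline{\theta})
=\frac{2i}{F(i\pi)}\mathbf{C}_{12}\prod_{i=3}^{n}\tilde{\phi}(\theta_{i1}+i\pi\nu)\tilde{\phi}(\theta_{i2})\,K_{3\dots n}(\theta_{3},\dots,\theta_{n})\,,
\end{equation*}
i.e.\ exactly the $\mathbf{1}$ part of $\bigl(\mathbf{1}-\sigma_{2}^{\mathcal{O}}\left(\sigma S\right)_{2n}\cdots\left(\sigma S\right)_{23}\bigr)$ and nothing more; this is precisely (\ref{RK}). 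Your phrase ``carrying particle 2 through the remaining legs generates the string'' is not a mechanism: a local residue computation at one pinch point cannot produce the ordered product $\left(\sigma S\right)_{2n}\cdots\left(\sigma S\right)_{23}$, and the paper's own statement of (\ref{RK}) confirms that it does not.

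The missing idea is how the second term arises. The paper does not evaluate the pinching (2) directly; it first establishes (i) and (ii), then uses them to write $F_{1\dots n}(\underline{\theta})\sigma_1^{\mathcal{O}}=\mathbf{C}_{1\bar{1}}F_{21\dots n}(\theta_2,\theta_1-2\pi i,\dots,\theta_n)\mathbf{C}^{1\bar{1}}\left(\sigma S\right)_{\bar{1}n}\cdots\left(\sigma S\right)_{\bar{1}3}$, observes that the pinching (2) of the original representation is the pinching (1) of this shifted and reordered one, applies the formula above to it, and invokes $\sigma_1^{\mathcal{O}}\sigma_{\bar{1}}^{\mathcal{O}}=1$; the S-matrix string, the factor $\sigma_2^{\mathcal{O}}$ and the relative minus sign in (\ref{1.14}) all come from exactly this step. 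Without either this argument or an honest direct evaluation of the second pinching, your proof of (iii) does not close. (Your sketch of (iv) is at the same level of detail as the paper itself, which defers the general bound-state computation and only verifies examples in appendix \ref{siv}, so no objection there.)
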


\noindent The proof of this\ theorem can be found in appendix \ref{sd}.

\subsection{Higher level off-shell Bethe Ansatz}

For this discussion it is convenient to introduce the variables $u,v$ defined
by $\theta=i\pi\nu_{k}u,~z=i\pi\nu_{k}v$ and $\nu_{k}=2/(N-2k-2)$. For the
$O(N-2k)$ S-matrix $S^{(k)}(u)$ we write as in (\ref{1.1})%
\begin{align}
\tilde{S}^{(k)}(u) &  =S^{(k)}/S_{+}^{(k)}=\tilde{b}(u)\mathbf{1}+\tilde
{c}(u)\mathbf{P}+\tilde{d}_{k}(u)\mathbf{K}\label{Su}\\
\tilde{b}(u) &  =\frac{u}{u-1},~\tilde{c}(u)=\frac{-1}{u-1},~\tilde{d}%
_{k}(u)=\frac{u}{u-1}\frac{1}{u-1/\nu_{k}}\,.\nonumber
\end{align}
and define%
\begin{align}
K_{\underline{\alpha}}^{(k)}(\underline{u}) &  =\tilde{N}_{m_{k}}^{(k)}%
\int_{\mathcal{C}_{\underline{u}}^{(1)}}dv_{1}\cdots\int_{\mathcal{C}%
_{\underline{u}}^{(m_{k})}}dv_{m_{k}}\,\tilde{h}(\underline{u},\underline
{v})p^{(k)}(\underline{u},\underline{v})\,\,\tilde{\Psi}_{\underline{\alpha}%
}^{(k)}(\underline{u},\underline{v})\label{Kk}\\
\tilde{\Psi}_{\underline{\alpha}}^{(k)}(\underline{u},\underline{v}) &
=L_{\underline{\mathring{\beta}}}^{(k)}(\underline{v})\,\left(  \tilde{\Phi
}^{(k)}\right)  _{\underline{\alpha}}^{\underline{\mathring{\beta}}%
}(\underline{u},\underline{v}),\quad L_{\underline{\mathring{\beta}}}%
^{(k)}(\underline{v})=K_{\underline{\mathring{\beta}}}^{(k+1)}(\underline
{v})\nonumber
\end{align}
with $\underline{u}=u_{1},\dots,u_{n_{k}},~\underline{v}=v_{1},\dots,v_{m_{k}%
}$ and $m_{k}=n_{k+1}$.

The equations (i)$^{(k)}$-(iii)$^{(k)}$ for $k>0$ are in terms of these
variables similar as in \cite{BFK7}

\begin{itemize}
\item[(i)$^{(k)}$]
\begin{equation}
K_{\dots ij\dots}^{(k)}(\dots,u_{i},u_{j},\dots)=K_{\dots ji\dots}^{(k)}%
(\dots,u_{j},u_{i},\dots)\,\tilde{S}_{ij}^{(k)}(u_{ij}) \label{ik}%
\end{equation}

\item[(ii)$^{(k)}$]
\begin{equation}
K_{1\ldots n_{k}}^{(k)}(u_{1}+2/\nu,u_{2},\dots,u_{n_{k}})\mathbf{C}^{\bar
{1}1}=K_{2\ldots n_{k}1}^{(k)}(u_{2},\dots,u_{n_{k}},u_{1})\mathbf{C}%
^{1\bar{1}} \label{iik}%
\end{equation}

\item[(iii)$^{(k)}$]
\begin{equation}
\operatorname*{Res}_{u_{12}=1/\nu_{k}}K_{1\dots n_{k}}^{(k)}(u_{1}%
,\dots,u_{n_{k}})=\prod_{i=3}^{n_{k}}\tilde{\phi}(u_{i1}+1)\tilde{\phi}%
(u_{i2})\mathbf{C}_{12}K_{3\dots n_{k}}^{(k)}(u_{3},\dots,u_{n_{k}%
})\,.\label{iiik}%
\end{equation}
in addition we have here the bound state relation

\item[(iv)$^{(k)}$]
\begin{equation}
\operatorname*{Res}_{u_{12}=1}F_{12\dots n}^{(k)}(u_{1},u_{2},\dots
,u_{n})\,=F_{(12)\dots n}^{(k)}(u_{(12)},\dots,u_{n})\,\sqrt{2}\Gamma
_{12}^{(12)}. \label{ivk}%
\end{equation}

\end{itemize}

The form factor equations (i) - (iv) of (\ref{1.10})-(\ref{1.16}) for
$O(N-2k)$ are similar to these higher level equations. There are, however, two
differences: 1) The shift in (ii)$^{(k)}$ is the one of $O(N)$ but not that of
$O(N-2k)$. 2) There is only one term on the right hand side in (iii)$^{(k)}$.

We assume that the p-function $p^{(k)}(\underline{u},\underline{v})$ satisfies
the equations
\begin{equation}%
\begin{array}
[c]{ll}%
(\mathrm{i}^{\prime}) & p^{(k)}(\underline{u},\underline{v})~\text{is
symmetric under }u_{i}\leftrightarrow u_{j},~v_{i}\leftrightarrow v_{j}\\[2mm]%
(\mathrm{ii}^{\prime}) & p^{(k)}(\underline{u},\underline{v})=(-1)^{m_{k}%
}p^{(k)}(u_{1}+2/\nu,u_{2},\dots,\underline{v})=(-1)^{n_{k}}p^{(k)}%
(\underline{u},v_{1}+2/\nu,v_{2},\dots)\\
(\mathrm{iii}^{\prime}) & p^{(k)}(\underline{u},\underline{z})=\,p^{(k)}%
(\underline{\check{u}},\underline{\check{v}})~\text{for }u_{12}=1/\nu
_{k},~v_{1}=u_{1}-1~\text{and }v_{2}=u_{2}\,
\end{array}
\label{pk}%
\end{equation}
where we use the short notations $\underline{\check{u}}=(u_{3},\dots,u_{n_{k}%
})$ and $\underline{\check{v}}=(v_{3},\dots,v_{m_{k}})$.

\begin{lemma}
\label{L1}For $0<k<\frac{1}{2}\left(  N-4\right)  $ the functions
$K_{\underline{\alpha}}^{(k)}(\underline{u})$ of (\ref{Kk}) satisfy the
equations $(\mathrm{i})^{(k)},~(\mathrm{ii})^{(k)}$ and $(\mathrm{iii})^{(k)}%
$, if the corresponding relations are satisfied for $K_{\underline
{\mathring{\beta}}}^{(k+1)}(\underline{v})$ and if suitable choice of the
normalization constants in (\ref{Kk}) is assumed. The weights of the operator
$\mathcal{O}$%
\begin{equation}
w^{\mathcal{O}}=(w_{1},\dots,w_{N/2})=\left(  n_{0}-n_{1},\dots,n_{N/2-2}%
-n_{-}-n_{+},n_{-}-n_{+}\right)  \, \label{w}%
\end{equation}
determine the numbers $m_{k}=n_{k+1}$ for a given number of particles
$n=n_{0}$
\end{lemma}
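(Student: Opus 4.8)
The plan is to prove the single inductive step that the lemma asserts: given the relations for $K^{(k+1)}=L^{(k)}$, deduce them for $K_{\underline{\alpha}}^{(k)}$ built from $L^{(k)}$ through the contour integral (\ref{Kk}). This mirrors, at level $k$, the three ingredients of the proof of the fundamental Theorem~\ref{TN} in appendix~\ref{sd}. For the symmetry relation $(\mathrm{i})^{(k)}$ the essential input is the exchange property of the Bethe co-vectors $\tilde{\Phi}^{(k)}$: interchanging $u_i\leftrightarrow u_j$ produces, through the Yang--Baxter relation obeyed by the monodromy operators $\tilde{T}$, precisely the factor $\tilde{S}^{(k)}_{ij}(u_{ij})$. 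Combining this with the symmetry $(\mathrm{i}')$ of $p^{(k)}$, the symmetry of $\tilde{h}(\underline{u},\underline{v})$ under permutations of the $u$'s and $v$'s (\ref{h}), and a relabelling of the integration variables $v_j$, one obtains $(\mathrm{i})^{(k)}$ immediately.

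For the crossing relation $(\mathrm{ii})^{(k)}$ I would shift $u_1\to u_1+2/\nu$ and use the shift equations (\ref{shiftphi})--(\ref{shifttau}) for $\tilde{\phi}$ and $\tau$ together with the periodicity property $(\mathrm{ii}')$ of $p^{(k)}$ and the lower-level crossing relation $(\mathrm{ii})^{(k+1)}$ for $L^{(k)}=K^{(k+1)}$; deforming the contours $\mathcal{C}_{\underline{u}}^{(j)}$ then converts the shifted expression into the cyclically permuted one. It is important here that the shift is by $2/\nu$, inherited from the top level $O(N)$, rather than by $2/\nu_k$: this is exactly the periodicity of $\tilde{\phi}$ and of the co-vectors built from the original monodromy matrix, so no mismatch arises.

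The residue relation $(\mathrm{iii})^{(k)}$ is the main obstacle. As $u_{12}\to 1/\nu_k$ the $\tilde{d}_k$-term of $\tilde{S}^{(k)}$ develops a pole at $u=1/\nu_k$ (see (\ref{Su})) which pinches the $v$-integration contours against the poles of the integrand. Closing around the pinch one is forced onto the configuration $v_1=u_1-1,~v_2=u_2$ of $(\mathrm{iii}')$; the charge-conjugation matrix $\mathbf{C}_{12}$ then emerges from the residue of $\tilde{\Phi}^{(k)}$, which loses two legs and reduces to the $(n_k-2)$-particle co-vector, while the product $\prod_{i=3}^{n_k}\tilde{\phi}(u_{i1}+1)\tilde{\phi}(u_{i2})$ is produced by evaluating the remaining factors of $\tilde{h}$ at the pinch. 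Property $(\mathrm{iii}')$ of $p^{(k)}$ guarantees that the surviving scalar function collapses to $p^{(k)}(\underline{\check{u}},\underline{\check{v}})$, so that the right-hand side is exactly $K_{3\dots n_k}^{(k)}$. I expect the delicate book-keeping of which poles pinch the contour, and the precise residue factor, to be the hardest part; the single term on the right (in contrast to the two-term top-level recursion (\ref{1.14})) reflects the absence of a statistics contribution at the inner levels and simplifies this step.

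Finally, the combinatorial claim that $w^{\mathcal{O}}$ determines $m_k=n_{k+1}$ follows by reading off the $O(N)$ Dynkin labels (Fig.~\ref{f2.2}): each nesting level corresponds to one node, and the weight components are the successive differences of the integration numbers, which invert uniquely to fix the $m_k$. Throughout, the normalization constants $\tilde{N}_{m_k}^{(k)}$ in (\ref{Kk}) are chosen recursively to absorb the prefactors generated by the shift and residue manipulations, so that the three equations hold with exactly the coefficients stated.
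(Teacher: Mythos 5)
Your plan for $(\mathrm{i})^{(k)}$, $(\mathrm{ii})^{(k)}$ and for the weight count is sound and is essentially the paper's route (the paper disposes of these by citing the $O(N)$ Bethe Ansatz results of \cite{BFK5}, including your correct observation that the shift in $(\mathrm{ii})^{(k)}$ is the $O(N)$ one). The genuine gap is in $(\mathrm{iii})^{(k)}$, the step you yourself single out as the main obstacle. In your description the residue at $u_{12}=1/\nu_{k}$ is produced entirely by the pinch of the $v_{1},v_{2}$ contours, with $\mathbf{C}_{12}$ coming from the residue of $\tilde{\Phi}^{(k)}$ and the product $\prod_{i\geq3}\tilde{\phi}(u_{i1}+1)\tilde{\phi}(u_{i2})$ from evaluating $\tilde{h}$ at the pinch; the higher-level function $L^{(k)}=K^{(k+1)}$ is a passive spectator. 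But at the pinch configuration $v_{1}=u_{1}-1$, $v_{2}=u_{2}$, $u_{12}=1/\nu_{k}$ one has $v_{12}=1/\nu_{k}-1=1/\nu_{k+1}$, i.e.\ $L^{(k)}(\underline{v})$ is evaluated exactly at \emph{its own} recursion point. The inductive hypothesis $(\mathrm{iii})^{(k+1)}$ must be invoked there: it supplies the lower-level charge conjugation (which, combined with the $\tilde{c}$, $\tilde{d}_{k}$ amplitudes of $\tilde{\Phi}^{(k)}$ and the intertwiner $\Pi$, yields $\mathbf{C}_{\alpha_{1}\alpha_{2}}$) and, crucially, extra factors $\tilde{\phi}(v_{j1}+1)\tilde{\phi}(v_{j2})$ for $j\geq3$. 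As written, your argument for $(\mathrm{iii})^{(k)}$ never uses the assumed relation for $K^{(k+1)}$ at all, which cannot be right, since that assumption is exactly what the lemma conditions on.

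Concretely, here is where your sketch would fail if carried out: after taking the pinch residues in $v_{1},v_{2}$, the integrand still contains the factors $\tilde{\phi}(u_{1}-v_{j})\tilde{\phi}(u_{2}-v_{j})\tau(v_{1j})\tau(v_{2j})$ for every $j\geq3$, sitting inside the remaining $\underline{\check{v}}$ integrations. These have no counterpart in $K_{\underline{\check{\alpha}}}^{(k)}(\underline{\check{u}})$, so your claim that the right-hand side ``is exactly $K_{3\dots n_{k}}^{(k)}$'' needs them to cancel, and they do not cancel by themselves. The paper's proof hinges on the identity, valid at the pinch point $u_{12}=1/\nu_{k}$, $v_{12}=1/\nu_{k+1}$, $u_{2}=v_{2}$, $u_{1}=v_{1}+1$,
\begin{equation*}
\left(\frac{a_{k+1}(v_{1j})\,a_{k+1}(v_{2j})}{a_{k}(u_{1}-v_{j})\,a_{k}(u_{2}-v_{j})}\,\tilde{\phi}(v_{j1}+1)\tilde{\phi}(v_{j2})\right)\left(\tilde{\phi}(u_{1}-v_{j})\tilde{\phi}(u_{2}-v_{j})\,\tau(v_{1j})\tau(v_{2j})\right)=1\,,
\end{equation*}
proved from (\ref{shiftphi}) together with $\tilde{b}(u)\tilde{\phi}(u)=-\tilde{\phi}(1-u)$ and $a_{k}(u_{1})a_{k}(u_{2})=\tilde{b}(-u_{2})/\tilde{b}(u_{1})$. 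Only this cancellation, between the leftover $\tilde{h}$-factors and the factors generated by $(\mathrm{iii})^{(k+1)}$ together with the S-matrix amplitude ratios, lets the surviving integrand reassemble into $\tilde{h}(\underline{\check{u}},\underline{\check{v}})\,L^{(k)}(\underline{\check{v}})\,\tilde{\Phi}^{(k)}(\underline{\check{u}},\underline{\check{v}})$, i.e.\ into $K_{\underline{\check{\alpha}}}^{(k)}(\underline{\check{u}})$ up to a constant absorbed into $\tilde{N}_{m_{k}}^{(k)}$. This missing mechanism is the actual content of the paper's proof of the lemma; what you call ``delicate book-keeping'' is not book-keeping but the identification of this inductive cancellation.
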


\noindent The proof of this lemma can be found in appendix \ref{se}.

\section{Examples}

\label{s5}

In this section, to illustrate our general results we present some simple examples.

\subsection{Current}

The $O(N)$ Noether current\footnote{In the real basis.}%
\[
J_{\mu}^{\alpha\beta}=\bar{\psi}^{\alpha}\gamma_{\mu}\psi^{\beta}%
\]
transforms as the antisymmetric tensor representation of $O(N)$. This operator
has therefore the weights $w^{J}=(w_{1},\dots,w_{N/2})=(1,1,0,\dots,0)$ (see
\cite{BFK5,BFK6}), which implies with (\ref{w}) that%
\[
n-2=n_{1}-1=n_{2}=\dots=n_{N/2-2}=n_{-}+n_{+},n_{-}=n_{+}\,.
\]
where $n_{i}$ are the numbers of integrations in the various levels of the
off-shell Bethe Ansatz. The existence of a pseudo-potential $J^{\alpha\beta
}(x)$ follows from the conservation law $\partial^{\mu}J_{\mu}^{\alpha\beta
}=0$
\[
J_{\mu}^{\alpha\beta}(x)=\epsilon_{\mu\nu}\partial^{\nu}J^{\alpha\beta}(x).
\]
For the form factors of both operators we have the relation%
\begin{equation}
F_{\underline{\alpha}}^{J_{\mu}}(\underline{\theta})=-i\epsilon_{\mu\nu
}\left(  {{\sum\,}}p_{i}^{\nu}\right)  F_{\underline{\alpha}}^{J}%
(\underline{\theta})\,. \label{J}%
\end{equation}
Because the Bethe Ansatz yields highest weight states we obtain the matrix
elements of the highest weight component of $J^{\alpha\beta}$ which means in
the complex basis $J(x)=J^{12}(x).$

We propose the form factors of the operator $J(x)$ (for $n=m+1=n_{1}%
+1=n_{2}+2$ even)%
\begin{align*}
\langle0|J(0)|\underline{\theta}\rangle_{\underline{\alpha}} &  =F_{\underline
{\alpha}}^{J}(\underline{\theta})=\prod_{i<j}F(\theta_{ij})K_{\underline
{\alpha}}^{J}(\underline{\theta})\\
K_{\underline{\alpha}}^{J}(\underline{\theta}) &  =N_{n}^{J}\int
_{\mathcal{C}_{\underline{\theta}}^{(1)}}dz_{1}\dots\int_{\mathcal{C}%
_{\underline{\theta}}^{(m)}}dz_{m}\,\tilde{h}(\underline{\theta},\underline
{z})p^{J}(\underline{\theta},\underline{\underline{z}})\,\tilde{\Psi
}_{\underline{\alpha}}(\underline{\theta},\underline{z})\,.
\end{align*}
Expressing $\tilde{\Psi}_{\underline{\alpha}}(\underline{\theta},\underline
{z})$ in terms of all higher level Bethe Ansatzes there appears the product of
all level p-functions $p^{J}(\underline{\theta},\underline{\underline{z}})$.
For the example of the current it depends on the $\theta_{i}$ and the second
level $z_{j}^{(2)}$
\begin{equation}
p^{J}(\underline{\theta},\underline{\underline{z}})=e^{\frac{1}{2}%
\Big(\sum\limits_{i=1}^{n}\theta_{i}-\sum\limits_{j=1}^{n_{2}}z_{j}%
^{(2)}-\frac{1}{2}n_{2}i\pi\mathring{\nu}\Big)}/\sum\limits_{i=1}^{n}%
e^{\theta_{i}}+e^{-\frac{1}{2}\Big(\sum\limits_{i=1}^{n}\theta_{i}%
-\sum\limits_{j=1}^{n_{2}}z_{j}^{(2)}-\frac{1}{2}n_{2}i\pi\mathring{\nu}%
\Big)}/\sum\limits_{i=1}^{n}e^{-\theta_{i}}\label{pJ}%
\end{equation}
which satisfies (\ref{p}) with
\[%
\begin{array}
[c]{lcl}%
\text{charge} &  & Q^{J}=0\\
\text{weight vector} &  & w^{J}=\left(  1,1,0,\dots,0\right)  \\
\text{statistics factor} &  & \sigma^{J}=1\\
\text{spin} &  & s^{J}=0,~s^{J_{\mu}}=1\,.
\end{array}
\]
For example for 2-particle form factor we obtain (see appendix \ref{sh})%
\begin{align}
F_{\alpha_{1}\alpha_{2}}^{J^{\alpha\beta}}\left(  \theta_{1},\theta
_{2}\right)   &  =im\left(  \delta_{\alpha_{1}}^{\alpha}\delta_{\alpha_{2}%
}^{\beta}-\delta_{\alpha_{1}}^{\beta}\delta_{\alpha_{2}}^{\alpha}\right)
\frac{1}{\cosh\frac{1}{2}\theta_{12}}F_{-}\left(  \theta\right)  \label{FJ}\\
F_{\alpha_{1}\alpha_{2}}^{J_{\mu}^{\alpha\beta}}\left(  \theta_{1},\theta
_{2}\right)   &  =i\left(  \delta_{\alpha_{1}}^{\alpha}\delta_{\alpha_{2}%
}^{\beta}-\delta_{\alpha_{1}}^{\beta}\delta_{\alpha_{2}}^{\alpha}\right)
\bar{v}(\theta_{1})\gamma_{\mu}u(\theta_{2})F_{-}\left(  \theta\right)
\label{FJmu}%
\end{align}
with $F_{-}\left(  \theta\right)  $ of (\ref{Ffull}) and (\ref{F-}), $\bar
{v}(\theta_{1})\gamma^{\pm}u(\theta_{2})=\pm i2me^{\pm\frac{1}{2}(\theta
_{1}+\theta_{2})}$ and (\ref{J}). This result agrees with \cite{KW}.

\subsection{Field}

For the fundamental field $\psi^{\alpha}(x)$ in (\ref{LGN}) the numbers
$n_{i}$ of integrations in the various levels of the off-shell Bethe Ansatz
satisfy%
\[
n-1=n_{1}=n_{2}=\dots=n_{N/2-2}=n_{-}+n_{+},n_{-}=n_{+}\,
\]
because $\psi^{\alpha}$ transforms as the vector representation of $O(N)$ (see
\cite{BFK5,BFK6} and (\ref{w})). We restrict to component $\psi=\psi^{1}$
because as usual the Bethe Ansatz yields highest weight states. For
convenience we multiply the field with the Dirac operator and take%
\begin{equation}
\chi{(}x{)}=i(-i\gamma\partial+m)\psi(x)\,. \label{chi}%
\end{equation}
We propose for the $n$-particle form factors ($n=m+1$ odd) for the spinor
components $\chi^{(\pm)}$%
\begin{align}
\langle0|\chi^{(\pm)}(0)|\underline{\theta}\rangle_{\underline{\alpha}}  &
=F_{\underline{\alpha}}^{\chi^{(\pm)}}(\underline{\theta})=\prod_{i<j}%
F(\theta_{ij})K_{\underline{\alpha}}^{\chi^{(\pm)}}(\underline{\theta
})\label{psi0}\\
K_{\underline{\alpha}}^{\chi^{(\pm)}}(\underline{\theta})  &  =N_{n}^{\chi
}\int_{\mathcal{C}_{\underline{\theta}}^{(1)}}dz_{1}\dots\int_{\mathcal{C}%
_{\underline{\theta}}^{(m)}}dz_{m}\,\tilde{h}(\underline{\theta},\underline
{z})p^{\chi^{(\pm)}}(\underline{\theta},\underline{z})\,\tilde{\Psi
}_{\underline{\alpha}}(\underline{\theta},\underline{z}) \label{psi1}%
\end{align}
with the p-function (for $n=m+1=$ odd $>1$)%
\begin{equation}
p^{\chi^{(\pm)}}(\underline{\theta},\underline{z})=\exp\left(  \mp\frac{1}%
{2}\left(  {{\textstyle\sum\limits_{j=1}^{n}}}\theta_{j}-{\textstyle\sum
\limits_{j=1}^{m}}z_{j}-\frac{1}{2}mi\pi\nu\right)  \right)  \label{pchi}%
\end{equation}
which solves (\ref{p}) with
\[%
\begin{array}
[c]{lcl}%
\text{charge} &  & Q^{\psi}=1\\
\text{weight vector} &  & w^{\psi}=\left(  1,\dots,0\right) \\
\text{statistics factor} &  & \sigma^{\psi}=-1\\
\text{spin} &  & s^{\psi}=\frac{1}{2}%
\end{array}
\]
The one particle form factor is trivial%
\[
\langle0|\psi(0)|\theta\rangle_{\alpha}=F_{\alpha}^{\psi}(\theta
)=\delta_{_{\alpha}}^{1}\,u(\theta)\,.
\]
For the three particle form factor ($n=3,~m=2$) the equations (\ref{psi0}) and
(\ref{psi1}) write as%
\begin{align}
\langle0|\chi(0)|\underline{\theta}\rangle_{\underline{\alpha}}  &
=F_{\underline{\alpha}}^{\chi}(\underline{\theta})=F(\theta_{12})F(\theta
_{13})F(\theta_{23})K_{\underline{\alpha}}^{\chi}(\underline{\theta
})\nonumber\\
K_{\underline{\alpha}}^{\chi}(\underline{\theta})  &  =N_{3}^{\chi}%
\int_{\mathcal{C}_{\underline{\theta}}^{(o)}}dz_{1}\int_{\mathcal{C}%
_{\underline{\theta}}^{(e)}}dz_{2}\,\tilde{h}(\underline{\theta},\underline
{z})p^{\chi}(\underline{\theta},\underline{z})\,\tilde{\Psi}_{\underline
{\alpha}}(\underline{\theta},\underline{z}) \label{phi3'}%
\end{align}
with%
\begin{align*}
\tilde{h}(\underline{\theta},\underline{z})  &  =\prod_{i=1}^{3}\tilde{\phi
}(\theta_{i}-z_{1})\tilde{\phi}(\theta_{i}-z_{2})\frac{1}{\tilde{\phi}%
(z_{12})\tilde{\phi}(-z_{12})}\\
p^{\chi^{\pm}}(\underline{\theta},\underline{z})  &  =e^{\mp\frac{1}{2}\left(
\theta_{1}+\theta_{2}+\theta_{3}-z_{1}-z_{2}-i\pi\nu\right)  }\\
\tilde{\Psi}_{\underline{\alpha}}(\underline{\theta},\underline{z})  &
=L_{\underline{\mathring{\beta}}}(\underline{z})\,\left(  \Pi_{\underline
{\beta}}^{\underline{\mathring{\beta}}}(\underline{z})\Omega\tilde{T}%
_{1}^{\beta_{2}}(\underline{\theta},z_{2})\tilde{T}_{1}^{\beta_{1}}%
(\underline{\theta},z_{1})\right)  _{\underline{\alpha}}.
\end{align*}
Lemma \ref{L1} for the $O(N-2)$ weights $w=(0,\dots,0)$ yields for the higher
level function $L_{\mathring{\beta}_{1}\mathring{\beta}_{2}}(\underline
{z})=\mathbf{\mathring{C}}_{\mathring{\beta}_{1}\mathring{\beta}_{2}}%
L(z_{12})$ with%
\begin{equation}
L(z)=\frac{\Gamma\left(  1-\frac{1}{2}\nu-\frac{z}{2\pi i}\right)
\Gamma\left(  -\frac{1}{2}\nu+\frac{z}{2\pi i}\right)  }{\Gamma\left(
1+\frac{1}{2}\left(  1-\nu\right)  -\frac{z}{2\pi i}\right)  \Gamma\left(
\frac{1}{2}\left(  1-\nu\right)  +\frac{z}{2\pi i}\right)  } \label{Lem}%
\end{equation}
(see appendix \ref{sf}). We could not perform the integrations\footnote{Doing
one integral we obtain a generalization of Meijer's G-functions. The second
integration does not yield known functions (to our knowledge). One could, of
course, apply numerical integration techniques and determine the asymptotic
behavior for large $\theta^{\prime}$s which is under investigation
\cite{BFK9}.} in (\ref{phi3'}) for general $N$, but we expand the exact
expression in $1/N$-expansion to compare the result with the $1/N$-expansion
of the $O(N)$ Gross-Neveu model in terms of Feynman graphs.

\paragraph{1/N expansion:}

We obtain the 3-particle form factor of $\chi(x)$ up to $O(N^{-2})$ as (see
appendix \ref{sg})%
\begin{equation}
{F_{\alpha\beta\gamma}^{\chi^{\delta}}}=\frac{8\pi m}{N}\,\left(
\delta_{\gamma}^{\delta}\mathbf{C}_{\alpha\beta}\frac{\cosh\frac{1}{2}%
\theta_{12}}{\theta_{12}-i\pi}\,u(\theta_{3})-\delta_{\beta}^{\delta
}\mathbf{C}_{\alpha\gamma}\frac{\cosh\frac{1}{2}\theta_{13}}{\theta_{13}-i\pi
}\,u(\theta_{2})+\delta_{\alpha}^{\delta}\mathbf{C}_{\beta\gamma}\frac
{\cosh\frac{1}{2}\theta_{23}}{\theta_{23}-i\pi}\,u(\theta_{1})\right)
\label{F3g}%
\end{equation}
which agrees with the $1/N$ expansion using Feynman graphs (see appendix
\ref{s1overN}).

\paragraph{Bound state form factor of $\psi$:}

We discuss the bound state fusion of 2 fundamental fermions $f+f\rightarrow
b_{2}$, a boson of mass $m_{2}$ (see (\ref{mk})). Writing (\ref{chi}) as
\[
\psi(x)=(i\gamma\partial+m)\tilde{\chi}{(}x{),~~}\tilde{\chi}{(}x{)=-i}\left(
\square+m^{2}\right)  ^{-1}\chi{(}x{)\,,}%
\]
we apply the form factor equation (iv), i.e. (\ref{1.16})%
\[
\operatorname*{Res}_{\theta_{12}=i\pi\nu}F_{123}^{\mathcal{O}}(\theta
_{1},\theta_{2},\theta_{3})\,=F_{(12)3}^{\mathcal{O}}(\theta_{(12)},\theta
_{3})\,\sqrt{2}\Gamma_{12}^{(12)}%
\]
to the operator\footnote{Strictly speaking
$F_{1\bar{1}1}^{\tilde{\chi}}\pm F_{\bar{1}11}^{\tilde{\chi}}$ give
$F_{b_{2}^{(0,2)}1}^{\tilde{\chi}}$.} $\mathcal{O}=\tilde{\chi}$
\[
\operatorname*{Res}\limits_{\theta_{12}=i\pi\nu}F_{1\bar{1}1}^{\tilde{\chi
}^{(\pm)}}(\underline{\theta})=F_{b_{2}1}^{\tilde{\chi}^{(\pm)}}(\theta
_{0},\theta_{3})\sqrt{2}\Gamma_{1\bar{1}}^{b_{2}}\,.
\]
The result may be written as
\begin{multline*}
F_{b_{2}1}^{\tilde{\chi}^{(\pm)}}(\theta_{0},\theta_{3})=e^{\mp\frac{1}%
{2}\theta_{0}}\left(  e^{\pm\frac{1}{4}i\pi\nu}f_{13}(\theta_{03})+e^{\mp
\frac{1}{4}i\pi\nu}f_{32}(\theta_{03})\right) \\
+e^{\mp\frac{1}{2}\theta_{3}}\left(  e^{\pm\frac{1}{2}i\pi\nu}f_{11}%
(\theta_{03})+e^{\mp\frac{1}{2}i\pi\nu}f_{22}(\theta_{03})\right)  .
\end{multline*}
where the functions $f_{ij}$ may be calculated in terms of hypergeometric
functions $_{3}F_{2}$ (for more details see appendix \ref{siv}). For example
$f_{13}$ is plotted for $N=12$ in Fig. \ref{f13a}.

\begin{figure}[h]
\begin{center}
\setlength{\unitlength}{0.240900pt}
\ifx\plotpoint\undefined\newsavebox{\plotpoint}\fi
\begin{picture}(1049,629)(0,0)
\font\gnuplot=cmr10 at 10pt
\gnuplot
\sbox{\plotpoint}{\rule[-0.200pt]{0.400pt}{0.400pt}}%
\put(60.0,82.0){\rule[-0.200pt]{0.400pt}{4.818pt}}
\put(60,41){\makebox(0,0){ 0}}
\put(60.0,570.0){\rule[-0.200pt]{0.400pt}{4.818pt}}
\put(153.0,82.0){\rule[-0.200pt]{0.400pt}{4.818pt}}
\put(153,41){\makebox(0,0){ 0.1}}
\put(153.0,570.0){\rule[-0.200pt]{0.400pt}{4.818pt}}
\put(246.0,82.0){\rule[-0.200pt]{0.400pt}{4.818pt}}
\put(246,41){\makebox(0,0){ 0.2}}
\put(246.0,570.0){\rule[-0.200pt]{0.400pt}{4.818pt}}
\put(339.0,82.0){\rule[-0.200pt]{0.400pt}{4.818pt}}
\put(339,41){\makebox(0,0){ 0.3}}
\put(339.0,570.0){\rule[-0.200pt]{0.400pt}{4.818pt}}
\put(432.0,82.0){\rule[-0.200pt]{0.400pt}{4.818pt}}
\put(432,41){\makebox(0,0){ 0.4}}
\put(432.0,570.0){\rule[-0.200pt]{0.400pt}{4.818pt}}
\put(525.0,82.0){\rule[-0.200pt]{0.400pt}{4.818pt}}
\put(525,41){\makebox(0,0){ 0.5}}
\put(525.0,570.0){\rule[-0.200pt]{0.400pt}{4.818pt}}
\put(617.0,82.0){\rule[-0.200pt]{0.400pt}{4.818pt}}
\put(617,41){\makebox(0,0){ 0.6}}
\put(617.0,570.0){\rule[-0.200pt]{0.400pt}{4.818pt}}
\put(710.0,82.0){\rule[-0.200pt]{0.400pt}{4.818pt}}
\put(710,41){\makebox(0,0){ 0.7}}
\put(710.0,570.0){\rule[-0.200pt]{0.400pt}{4.818pt}}
\put(803.0,82.0){\rule[-0.200pt]{0.400pt}{4.818pt}}
\put(803,41){\makebox(0,0){ 0.8}}
\put(803.0,570.0){\rule[-0.200pt]{0.400pt}{4.818pt}}
\put(896.0,82.0){\rule[-0.200pt]{0.400pt}{4.818pt}}
\put(896,41){\makebox(0,0){ 0.9}}
\put(896.0,570.0){\rule[-0.200pt]{0.400pt}{4.818pt}}
\put(989.0,82.0){\rule[-0.200pt]{0.400pt}{4.818pt}}
\put(989,41){\makebox(0,0){ 1}}
\put(989.0,570.0){\rule[-0.200pt]{0.400pt}{4.818pt}}
\put(60.0,336.0){\rule[-0.200pt]{223.796pt}{0.400pt}}
\put(60.0,82.0){\rule[-0.200pt]{223.796pt}{0.400pt}}
\put(989.0,82.0){\rule[-0.200pt]{0.400pt}{122.377pt}}
\put(60.0,590.0){\rule[-0.200pt]{223.796pt}{0.400pt}}
\put(60.0,82.0){\rule[-0.200pt]{0.400pt}{122.377pt}}
\put(597,527){\makebox(0,0)[r]{$f_{13}(x)$}}
\put(617.0,527.0){\rule[-0.200pt]{24.090pt}{0.400pt}}
\put(69,335){\usebox{\plotpoint}}
\put(125,333.67){\rule{4.577pt}{0.400pt}}
\multiput(125.00,334.17)(9.500,-1.000){2}{\rule{2.289pt}{0.400pt}}
\put(69.0,335.0){\rule[-0.200pt]{13.490pt}{0.400pt}}
\put(181,332.67){\rule{4.336pt}{0.400pt}}
\multiput(181.00,333.17)(9.000,-1.000){2}{\rule{2.168pt}{0.400pt}}
\put(144.0,334.0){\rule[-0.200pt]{8.913pt}{0.400pt}}
\put(218,331.67){\rule{4.577pt}{0.400pt}}
\multiput(218.00,332.17)(9.500,-1.000){2}{\rule{2.289pt}{0.400pt}}
\put(237,330.67){\rule{4.336pt}{0.400pt}}
\multiput(237.00,331.17)(9.000,-1.000){2}{\rule{2.168pt}{0.400pt}}
\put(255,329.17){\rule{3.900pt}{0.400pt}}
\multiput(255.00,330.17)(10.905,-2.000){2}{\rule{1.950pt}{0.400pt}}
\multiput(274.00,327.95)(3.811,-0.447){3}{\rule{2.500pt}{0.108pt}}
\multiput(274.00,328.17)(12.811,-3.000){2}{\rule{1.250pt}{0.400pt}}
\multiput(292.00,324.93)(1.220,-0.488){13}{\rule{1.050pt}{0.117pt}}
\multiput(292.00,325.17)(16.821,-8.000){2}{\rule{0.525pt}{0.400pt}}
\multiput(311.58,315.21)(0.494,-0.717){25}{\rule{0.119pt}{0.671pt}}
\multiput(310.17,316.61)(14.000,-18.606){2}{\rule{0.400pt}{0.336pt}}
\put(325.17,292){\rule{0.400pt}{1.300pt}}
\multiput(324.17,295.30)(2.000,-3.302){2}{\rule{0.400pt}{0.650pt}}
\put(326.67,284){\rule{0.400pt}{1.927pt}}
\multiput(326.17,288.00)(1.000,-4.000){2}{\rule{0.400pt}{0.964pt}}
\put(328.17,272){\rule{0.400pt}{2.500pt}}
\multiput(327.17,278.81)(2.000,-6.811){2}{\rule{0.400pt}{1.250pt}}
\put(330.17,253){\rule{0.400pt}{3.900pt}}
\multiput(329.17,263.91)(2.000,-10.905){2}{\rule{0.400pt}{1.950pt}}
\put(332.17,219){\rule{0.400pt}{6.900pt}}
\multiput(331.17,238.68)(2.000,-19.679){2}{\rule{0.400pt}{3.450pt}}
\put(334.17,140){\rule{0.400pt}{15.900pt}}
\multiput(333.17,186.00)(2.000,-45.999){2}{\rule{0.400pt}{7.950pt}}
\put(336.17,82){\rule{0.400pt}{11.700pt}}
\multiput(335.17,115.72)(2.000,-33.716){2}{\rule{0.400pt}{5.850pt}}
\put(338.17,82){\rule{0.400pt}{101.700pt}}
\multiput(337.17,82.00)(2.000,296.917){2}{\rule{0.400pt}{50.850pt}}
\put(339.67,536){\rule{0.400pt}{13.009pt}}
\multiput(339.17,563.00)(1.000,-27.000){2}{\rule{0.400pt}{6.504pt}}
\put(341.17,457){\rule{0.400pt}{15.900pt}}
\multiput(340.17,503.00)(2.000,-45.999){2}{\rule{0.400pt}{7.950pt}}
\put(343.17,423){\rule{0.400pt}{6.900pt}}
\multiput(342.17,442.68)(2.000,-19.679){2}{\rule{0.400pt}{3.450pt}}
\put(345.17,404){\rule{0.400pt}{3.900pt}}
\multiput(344.17,414.91)(2.000,-10.905){2}{\rule{0.400pt}{1.950pt}}
\put(347.17,392){\rule{0.400pt}{2.500pt}}
\multiput(346.17,398.81)(2.000,-6.811){2}{\rule{0.400pt}{1.250pt}}
\put(349.17,384){\rule{0.400pt}{1.700pt}}
\multiput(348.17,388.47)(2.000,-4.472){2}{\rule{0.400pt}{0.850pt}}
\put(351.17,378){\rule{0.400pt}{1.300pt}}
\multiput(350.17,381.30)(2.000,-3.302){2}{\rule{0.400pt}{0.650pt}}
\multiput(353.58,375.21)(0.494,-0.717){25}{\rule{0.119pt}{0.671pt}}
\multiput(352.17,376.61)(14.000,-18.606){2}{\rule{0.400pt}{0.336pt}}
\multiput(367.00,356.93)(1.154,-0.488){13}{\rule{1.000pt}{0.117pt}}
\multiput(367.00,357.17)(15.924,-8.000){2}{\rule{0.500pt}{0.400pt}}
\multiput(385.00,348.95)(4.034,-0.447){3}{\rule{2.633pt}{0.108pt}}
\multiput(385.00,349.17)(13.534,-3.000){2}{\rule{1.317pt}{0.400pt}}
\put(404,345.17){\rule{3.700pt}{0.400pt}}
\multiput(404.00,346.17)(10.320,-2.000){2}{\rule{1.850pt}{0.400pt}}
\put(422,343.67){\rule{4.577pt}{0.400pt}}
\multiput(422.00,344.17)(9.500,-1.000){2}{\rule{2.289pt}{0.400pt}}
\put(441,342.67){\rule{4.336pt}{0.400pt}}
\multiput(441.00,343.17)(9.000,-1.000){2}{\rule{2.168pt}{0.400pt}}
\put(199.0,333.0){\rule[-0.200pt]{4.577pt}{0.400pt}}
\put(608,342.67){\rule{4.577pt}{0.400pt}}
\multiput(608.00,342.17)(9.500,1.000){2}{\rule{2.289pt}{0.400pt}}
\put(459.0,343.0){\rule[-0.200pt]{35.894pt}{0.400pt}}
\put(645,343.67){\rule{4.577pt}{0.400pt}}
\multiput(645.00,343.17)(9.500,1.000){2}{\rule{2.289pt}{0.400pt}}
\put(627.0,344.0){\rule[-0.200pt]{4.336pt}{0.400pt}}
\put(682,344.67){\rule{4.577pt}{0.400pt}}
\multiput(682.00,344.17)(9.500,1.000){2}{\rule{2.289pt}{0.400pt}}
\put(701,346.17){\rule{3.900pt}{0.400pt}}
\multiput(701.00,345.17)(10.905,2.000){2}{\rule{1.950pt}{0.400pt}}
\put(720,347.67){\rule{4.336pt}{0.400pt}}
\multiput(720.00,347.17)(9.000,1.000){2}{\rule{2.168pt}{0.400pt}}
\put(738,349.17){\rule{3.900pt}{0.400pt}}
\multiput(738.00,348.17)(10.905,2.000){2}{\rule{1.950pt}{0.400pt}}
\multiput(757.00,351.61)(3.811,0.447){3}{\rule{2.500pt}{0.108pt}}
\multiput(757.00,350.17)(12.811,3.000){2}{\rule{1.250pt}{0.400pt}}
\multiput(775.00,354.60)(2.674,0.468){5}{\rule{2.000pt}{0.113pt}}
\multiput(775.00,353.17)(14.849,4.000){2}{\rule{1.000pt}{0.400pt}}
\multiput(794.00,358.59)(1.332,0.485){11}{\rule{1.129pt}{0.117pt}}
\multiput(794.00,357.17)(15.658,7.000){2}{\rule{0.564pt}{0.400pt}}
\multiput(812.00,365.58)(0.964,0.491){17}{\rule{0.860pt}{0.118pt}}
\multiput(812.00,364.17)(17.215,10.000){2}{\rule{0.430pt}{0.400pt}}
\multiput(831.00,375.58)(0.526,0.495){33}{\rule{0.522pt}{0.119pt}}
\multiput(831.00,374.17)(17.916,18.000){2}{\rule{0.261pt}{0.400pt}}
\multiput(850.58,393.00)(0.495,1.235){33}{\rule{0.119pt}{1.078pt}}
\multiput(849.17,393.00)(18.000,41.763){2}{\rule{0.400pt}{0.539pt}}
\multiput(868.58,437.00)(0.495,4.095){35}{\rule{0.119pt}{3.321pt}}
\multiput(867.17,437.00)(19.000,146.107){2}{\rule{0.400pt}{1.661pt}}
\put(664.0,345.0){\rule[-0.200pt]{4.336pt}{0.400pt}}
\put(895.17,82){\rule{0.400pt}{101.700pt}}
\multiput(894.17,378.92)(2.000,-296.917){2}{\rule{0.400pt}{50.850pt}}
\put(887.0,590.0){\rule[-0.200pt]{1.927pt}{0.400pt}}
\multiput(905.58,82.00)(0.495,3.316){35}{\rule{0.119pt}{2.711pt}}
\multiput(904.17,82.00)(19.000,118.374){2}{\rule{0.400pt}{1.355pt}}
\multiput(924.58,206.00)(0.495,1.142){35}{\rule{0.119pt}{1.005pt}}
\multiput(923.17,206.00)(19.000,40.914){2}{\rule{0.400pt}{0.503pt}}
\multiput(943.00,249.58)(0.561,0.494){29}{\rule{0.550pt}{0.119pt}}
\multiput(943.00,248.17)(16.858,16.000){2}{\rule{0.275pt}{0.400pt}}
\multiput(961.00,265.59)(1.408,0.485){11}{\rule{1.186pt}{0.117pt}}
\multiput(961.00,264.17)(16.539,7.000){2}{\rule{0.593pt}{0.400pt}}
\put(897.0,82.0){\rule[-0.200pt]{1.927pt}{0.400pt}}
\put(60.0,82.0){\rule[-0.200pt]{223.796pt}{0.400pt}}
\put(989.0,82.0){\rule[-0.200pt]{0.400pt}{122.377pt}}
\put(60.0,590.0){\rule[-0.200pt]{223.796pt}{0.400pt}}
\put(60.0,82.0){\rule[-0.200pt]{0.400pt}{122.377pt}}
\end{picture}
\end{center}
\caption{Plot of the bound state form factor function $f_{13}(x),$
$(\theta=i\pi x)$ for $N=12~(\nu=1/5)$.}%
\label{f13a}%
\end{figure}
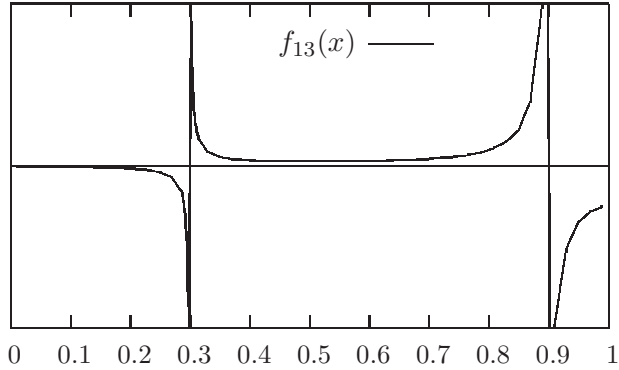The pole at $\theta=\frac{3}{2}i\pi\nu$ (here $x=0.3$) belongs to
the bound state fusion $b_{2}^{(r)}+f\rightarrow b_{3}^{(r\pm1)}$, a fermion
of mass $m_{3}$ (see (\ref{mk})). The pole at $\theta=i\pi\left(  1-\frac
{1}{2}\nu\right)  $ (here $x=0.9$) belongs to the bound state fusion
$b_{2}^{(r)}+f\rightarrow f$, which is again the fundamental fermion. These
are examples of the general \textquotedblleft bootstrap
principal\textquotedblright\ \cite{BK}.

\subsection{Energy momentum}

\label{s5.1}

The energy momentum tensor is in terms of fields is%
\[
T^{\mu\nu}(x)=\frac{1}{2}i\bar{\psi}\gamma^{\mu}\overleftrightarrow
{\partial^{\nu}}\psi-g^{\mu\nu}\mathcal{L}%
\]
with the trace%
\[
T_{\,\,\,\mu}^{\mu}(x)=m\bar{\psi}\psi\,.
\]
Because $T^{\mu\nu}$ is an $O(N)$ iso-scalar we have the weights
$w=(w_{1},\dots,w_{N/2})=(0,\dots,0)$ (see \cite{BFK5,BFK6}) which implies
that%
\[
n=n_{1}=\dots=n_{N/2-2}=n_{-}+n_{+},n_{-}=n_{+}\,.
\]
We write the energy momentum tensor in terms of an energy momentum potential
(see e.g. \cite{BFK7})%
\begin{align*}
T^{\mu\nu}(x)  &  =R^{\mu\nu}(i\partial_{x})T(x)\\
R^{\mu\nu}(P)  &  =-P^{\mu}P^{\nu}+g^{\mu\nu}P^{2}\\
T_{\,\,\,\mu}^{\mu}(x)  &  =(i\partial_{x})^{2}T(x)\,.
\end{align*}
For $\bar{\psi}\psi$ we propose the $n$-particle form factor as%
\begin{align}
\langle0|\bar{\psi}\psi(0)|\underline{\theta}\rangle_{\underline{\alpha}}  &
=F_{\underline{\alpha}}^{\bar{\psi}\psi}(\underline{\theta})=N_{n}^{\bar{\psi
}\psi}\prod_{i<j}F(\theta_{ij})K_{\underline{\alpha}}^{\bar{\psi}\psi
}(\underline{\theta})\nonumber\\
K_{\underline{\alpha}}^{\bar{\psi}\psi}(\underline{\theta})  &  =\int
_{\mathcal{C}_{\underline{\theta}}^{(1)}}dz_{1}\dots\int_{\mathcal{C}%
_{\underline{\theta}}^{(m)}}dz_{m}\,\tilde{h}(\underline{\theta},\underline
{z})p^{\bar{\psi}\psi}(\underline{\theta},\underline{z})\,\tilde{\Psi
}_{\underline{\alpha}}(\underline{\theta},\underline{z}) \label{EM}%
\end{align}
with $m=n=$ even and%
\begin{align}
\tilde{h}(\underline{\theta},\underline{z})  &  =\prod_{i=1}^{n}\prod
_{j=1}^{m}\tilde{\phi}(\theta_{i}-z_{j})\prod_{1\leq i<j\leq m}\tau
(z_{ij}),\nonumber\\
p^{\bar{\psi}\psi}(\underline{\theta},\underline{z})  &  =1\label{pT}\\
\tilde{\Psi}_{\underline{\alpha}}(\underline{\theta},\underline{z})  &
=L_{\underline{\mathring{\beta}}}(\underline{z})\,\Big(\Pi_{\underline{\beta}%
}^{\underline{\mathring{\beta}}}(\underline{z})\Omega\tilde{T}_{1}^{\beta_{m}%
}(\underline{\theta},z_{m})\dots\tilde{T}_{1}^{\beta_{1}}(\underline{\theta
},z_{1})\Big)_{\underline{\alpha}}\,.\nonumber
\end{align}

We do not calculate the integrals in (\ref{EM}) for general $N$, but the 2
particle form factor follows from lemma \ref{l2a} in appendix \ref{sf}%

\begin{equation}
F_{\alpha_{1}\alpha_{2}}^{\bar{\psi}\psi}(\underline{\theta})=\langle
\,0\,|\,\bar{\psi}\psi(0)\,|\,\theta_{1},\theta_{2}\,\rangle_{\alpha_{1}%
\alpha_{2}}^{in}=\mathbf{C}_{\alpha_{1}\alpha_{2}}\,\bar{v}(\theta
_{1})u(\theta_{2})\,F_{0}(\theta_{12})\label{EMN}%
\end{equation}
and%
\begin{align*}
F_{\alpha_{1}\alpha_{2}}^{T^{\mu\nu}}(\underline{\theta}) &  =\langle
\,0\,|\,T^{\mu\nu}(0)\,|\,\theta_{1},\theta_{2}\,\rangle_{\alpha_{1}\alpha
_{2}}^{in}=\mathbf{C}_{\alpha_{1}\alpha_{2}}\bar{v}(\theta_{1})\gamma^{\mu
}u(\theta_{2})\,\tfrac{1}{2}(p_{1}^{\nu}-p_{2}^{\nu})\,F_{0}(\theta_{12})\\
&  =\mathbf{C}_{\alpha_{1}\alpha_{2}}\bar{v}(\theta_{1})u(\theta_{2}%
)\,m\frac{\left(  p_{1}-p_{2}\right)  ^{\mu}(p_{1}^{\nu}-p_{2}^{\nu})}{\left(
p_{1}-p_{2}\right)  ^{2}}\,F_{0}(\theta_{12})\\
F_{\alpha_{1}\alpha_{2}}^{T}(\underline{\theta}) &  =\langle
\,0\,|\,T(0)\,|\,p_{1},p_{2}\,\rangle_{\alpha_{1}\alpha_{2}}^{in}%
=\mathbf{C}_{\alpha_{1}\alpha_{2}}\frac{\bar{v}(\theta_{1})u(\theta_{2}%
)}{4m\cosh^{2}\frac{1}{2}\theta_{12}}\,F_{0}(\theta_{12})
\end{align*}
with $F_{0}(\theta)$ given by (\ref{F0}) and (\ref{Ffull}).

\paragraph{1/N expansion:}

For $N\rightarrow\infty$ we obtain
\[
F_{\alpha_{1}\alpha_{2}}^{\bar{\psi}\psi}(\underline{\theta})=\mathbf{C}%
_{\alpha_{1}\alpha_{2}}\bar{v}(\theta_{2})u(\theta_{1})\frac{2\coth\frac{1}%
{2}\theta_{12}}{\theta_{12}-i\pi}+O(1/N)\,.
\]
This result agrees with the one obtained by computing Feynman graphs as was
done in \cite{KW}.

\subsection*{Conclusions:}

\addcontentsline{toc}{section}{Conclusions}

In this article we have enlarged our $O(N)$ Bethe Ansatz knowledge of the
$O(N)$ Gross-Neveu model, which exhibits a very rich bound state structure
and, consequently, creates a rich form factor hierarchy. We have computed the
form factors for the fundamental Fermi field, which transforms as a vector
representation of $O(N)$. Then we have also constructed the form factors for
the Noether current and the energy-momentum tensor. In addition for the two
particle case we have proved the recursion relation for the higher level
K-functions. Finally we have checked our results against the usual $1/N$
expansion and found full agreement. In a forthcoming paper we will investigate
the kink form factors, possibly proving a kink field equation. Moreover, we
will perform a detailed analysis of the $O(6)$ Gross-Neveu model, a starting
point in the nesting procedure.

\paragraph{Acknowledgment:}

The authors have profited from discussions with A. Fring, R. Schra\-der and B.
Schroer. H.B. thanks A. Belavin, A. Ferraz, V. Korepin, P. Sodano and P.
Wiegmann for valuable discussions. H.B. acknowledges financial support from
the Armenian grant 11-1c\_028 and the Armenian-Russian grant AR-17. He is also
grateful to the International Institute of Physics of UFRN (Natal) for
hospitality. A.F. acknowledges financial support from CNPq (Conselho Nacional
de Desenvolvimento Cientifico e Tecnologico). M.K. thanks J. Balog and P.
Weisz for discussions and hospitality at the Max-Planck Institut f\"{u}r
Physik (M\"{u}nchen), where parts of this work have been performed.

\appendix

\section*{Appendices}

\addcontentsline{toc}{part}{Appendices}

\renewcommand{\theequation}{\mbox{\Alph{section}.\arabic{equation}}} \setcounter{equation}{0}

\section{Proof of the main theorem \ref{TN}}

\label{sd}The identity%
\begin{equation}
\int_{\mathcal{C}_{a}}dz\Gamma(a-z)f(z)=2\pi i\,\operatorname*{Res}_{z=a}%
\sum_{l=-\infty}^{\infty}\Gamma(a-z-l)f(z+l) \label{Int}%
\end{equation}
where the $\mathcal{C}_{a}$ encircles the poles of $\Gamma(a-z)$
anti-clockwise may be used to write the K-function $K_{\underline{\alpha}%
}^{\mathcal{O}}(\underline{\theta})$ defined by the integral representation
(\ref{BA}) as a sum of \textquotedblleft Jackson-type Integrals" as
investigated in \cite{BFK5}. These expressions satisfy symmetry properties and
a matrix difference equation which are equivalent to the form factor equations
(i) and (ii). We have to prove, that due to the assumptions of theorem
\ref{TN} in addition the residue relations (iii)
\[
\operatorname*{Res}_{\theta_{12}=i\pi}F_{1\dots n}^{\mathcal{O}}(\theta
_{1},\dots,\theta_{n})=2i\,\mathbf{C}_{12}\,F_{3\dots n}^{\mathcal{O}}%
(\theta_{3},\dots,\theta_{n})\left(  \mathbf{1}-\sigma_{2}^{\mathcal{O}%
}\left(  \sigma S\right)  _{2n}\dots\left(  \sigma S\right)  _{23}\right)
\]
and (iv)%
\[
\operatorname*{Res}_{\theta_{12}=i\pi\nu}F_{12\dots n}^{\mathcal{O}}%
(\theta_{1},\theta_{2},\dots,\theta_{n})\,=F_{(12)\dots n}^{\mathcal{O}%
}(\theta_{(12)},\dots,\theta_{n})\,\sqrt{2}\Gamma_{12}^{(12)}%
\]
are satisfied.

\begin{proof}
We prove that the K-function $K_{1\dots n}^{\mathcal{O}}(\underline{\theta})$
defined by (\ref{2.10}) and (\ref{BA}) satisfies the form factor equations (i)
- (iii) which read in terms of $K_{1\dots n}^{\mathcal{O}}(\underline{\theta
})$ as%
\begin{gather}
K_{\dots ij\dots}^{\mathcal{O}}(\dots,\theta_{i},\theta_{j},\dots)=K_{\dots
ji\dots}^{\mathcal{O}}(\dots,\theta_{j},\theta_{i},\dots)\,\tilde{S}%
_{ij}(\theta_{ij})\label{2.12}\\
K_{1\ldots n}^{\mathcal{O}}(\theta_{1}+2\pi i,\theta_{2},\dots,\theta
_{n})\sigma_{1}^{\mathcal{O}}\mathbf{C}^{\bar{1}1}=K_{2\ldots n1}%
^{\mathcal{O}}(\theta_{2},\dots,\theta_{n},\theta_{1})\mathbf{C}^{1\bar{1}%
}\label{2.14}\\
\operatorname*{Res}_{\theta_{12}=i\pi}K_{1\dots n}(\underline{\theta}%
)=\frac{2i\,}{F(i\pi)}\mathbf{C}_{12}\prod_{i=3}^{n}\tilde{\phi}(\theta
_{i1}+i\pi\nu)\tilde{\phi}(\theta_{i2})K_{3\dots n}(\theta_{3},\dots
,\theta_{n})\left(  \mathbf{1}-\sigma_{2}^{\mathcal{O}}S_{2n}\dots
S_{23}\right)  \label{2.15}%
\end{gather}
where (\ref{FF}) has been used.

\begin{itemize}
\item (i) follows as in \cite{BFK5},

\item (ii) follows as in \cite{BFK5}, however, here (\ref{p}) is responsible
for the statistics factor $\sigma_{1}^{\mathcal{O}}$ in (\ref{2.14}).

\item (iii) the residue of
\begin{equation}
K_{1\dots n}^{\mathcal{O}}(\underline{\theta})=N_{n}^{\mathcal{O}}%
\int_{\mathcal{C}_{\underline{\theta}}^{(1)}}dz_{1}\cdots\int_{\mathcal{C}%
_{\underline{\theta}}^{(m)}}dz_{m}\,\tilde{h}(\underline{\theta},\underline
{z})\,p^{\mathcal{O}}(\underline{\theta},\underline{z})\,\,\tilde{\Psi
}_{1\dots n}(\underline{\theta},\underline{z}) \label{Kn}%
\end{equation}
consists of two terms
\[
\operatorname*{Res}_{\theta_{12}=i\pi}K_{1\dots n}(\underline{\theta
})=\bigg(\operatorname*{Res}_{\theta_{12}=i\pi}^{(1)}+\operatorname*{Res}%
_{\theta_{12}=i\pi}^{(2)}\bigg)K_{1\dots n}(\underline{\theta})\,.
\]
This is because for each $z_{j}$ integration with $j$ even the contours will
be \textquotedblleft pinched\textquotedblright\ at two points (see Fig.
\ref{f5.1}):
\end{itemize}

\begin{itemize}
\item[(1)] $z_{j}=\theta_{2}\approx\theta_{1}-i\pi$

\item[(2)] $z_{j}=\theta_{1}-2\pi i\approx\theta_{2}-i\pi$
\end{itemize}

We prove in appendix \ref{se} the residue formulas for general level $k$ of
the off-shell Bethe Ansatz. In particular for $k=0$ the general result implies
that the pinching (1) gives%
\begin{equation}
\operatorname*{Res}_{\theta_{12}=i\pi}^{(1)}K_{1\dots n}(\underline{\theta
})=\frac{2i\,}{F(i\pi)}\mathbf{C}_{12}\prod_{i=3}^{n}\tilde{\phi}(\theta
_{i1}+i\pi\nu)\tilde{\phi}(\theta_{i2})K_{3\dots n}(\theta_{3},\dots
,\theta_{n}) \label{RK}%
\end{equation}
for a suitable choice of the normalization constants in (\ref{Kn}). Therefore
we have proved%
\[
\operatorname*{Res}_{\theta_{12}=i\pi}^{(1)}F_{1\dots n}(\theta_{1}%
,\dots,\theta_{n})=2i\,\mathbf{C}_{12}\,F_{3\dots n}(\theta_{3},\dots
,\theta_{n})\,.
\]
We use (ii) and (i) to write
\begin{align*}
F_{1\ldots n}(\underline{\theta})\sigma_{1}^{\mathcal{O}}  &  =\mathbf{C}%
_{1\bar{1}}F_{2\ldots n1}(\theta_{2},\dots,\theta_{n},\theta_{1}-2\pi
i)\mathbf{C}^{1\bar{1}}\\
&  =\mathbf{C}_{1\bar{1}}F_{21\ldots n}(\theta_{2},\theta_{1}-2\pi
i,\dots,\theta_{n})\mathbf{C}^{1\bar{1}}\left(  \sigma S\right)  _{\bar{1}%
n}\dots\left(  \sigma S\right)  _{\bar{1}3}\,.
\end{align*}
Then the result for $\operatorname*{Res}\limits_{\theta_{1}=\theta_{2}+i\pi
}^{(1)}$ implies for the contribution of the pinching at $z_{j}=\theta
_{1}-2\pi i\approx\theta_{2}-i\pi$
\begin{align*}
\operatorname*{Res}_{\theta_{1}=\theta_{2}+i\pi}^{(2)}F_{1\ldots n}%
(\underline{\theta})\sigma_{1}^{\mathcal{O}}  &  =-\operatorname*{Res}%
_{\theta_{2}=\left(  \theta_{1}-2\pi i\right)  +i\pi}^{(1)}\mathbf{C}%
_{1\bar{1}}F_{21\ldots n}(\theta_{2},\theta_{1}-2\pi i,\dots,\theta
_{n})\mathbf{C}^{1\bar{1}}\left(  \sigma S\right)  _{\bar{1}n}\dots\left(
\sigma S\right)  _{\bar{1}3}\\
&  =-\mathbf{C}_{1\bar{1}}2i\,\mathbf{C}_{21}\,F_{3\dots n}^{\mathcal{O}%
}(\theta_{3},\dots,\theta_{n})\mathbf{C}^{1\bar{1}}\left(  \sigma S\right)
_{\bar{1}n}\dots\left(  \sigma S\right)  _{\bar{1}3}\\
&  =-2i\mathbf{C}_{12}\,\,F_{3\dots n}^{\mathcal{O}}(\theta_{3},\dots
,\theta_{n})\sigma_{2}^{\mathcal{O}}\left(  \sigma S\right)  _{2n}\dots\left(
\sigma S\right)  _{23}\sigma_{1}^{\mathcal{O}}%
\end{align*}
using $\sigma_{1}^{\mathcal{O}}\sigma_{\bar{1}}^{\mathcal{O}}=1$.

(iv) Because there are bound states we also have to discuss the form factor
equation (iv) (\ref{1.16})%
\[
\operatorname*{Res}_{\theta_{12}=i\pi\nu}F_{12\dots n}^{\mathcal{O}}%
(\theta_{1},\theta_{2},\underline{\hat{\theta}})\,=F_{(12)\dots n}%
^{\mathcal{O}}(\theta_{(12)},\underline{\hat{\theta}})\,\sqrt{2}\Gamma
_{12}^{(12)}.
\]
The bound state form factor $F_{(12)\dots n}^{\mathcal{O}}(\theta
_{(12)},\underline{\hat{\theta}})\,$is then obtained from the residue%
\[
\operatorname*{Res}_{\theta_{12}=i\pi\nu}K_{12\dots n}^{\mathcal{O}%
}(\underline{\theta})=\operatorname*{Res}_{\theta_{12}=i\pi\nu}N_{n}%
^{\mathcal{O}}\int_{\mathcal{C}_{\underline{\theta}}^{(1)}}dz_{1}\cdots
\int_{\mathcal{C}_{\underline{\theta}}^{(m)}}dz_{m}\,\tilde{h}(\underline
{\theta},\underline{z})\,p^{\mathcal{O}}(\underline{\theta},\underline
{z})\,\,\tilde{\Psi}_{1\dots n}(\underline{\theta},\underline{z})\,.
\]
Similar as in the proof of (iii) the residue is obtained from pinching at:

$z_{j}=\theta_{1}-i\pi\nu\approx\theta_{2}$ for $\mathcal{C}^{(o)}$ and
$z_{j}=\theta_{2}\approx\theta_{1}-i\pi\nu$ for $\mathcal{C}^{(e)}$.

Here we will not perform the lengthy calculations and write the complicated
result, but in appendix \ref{siv} we will calculate the bound state form
factors for the examples of section \ref{s5}.
\end{proof}

\section{Two-particle current form factor}

\label{sh}

Derivation of (\ref{FJ}) and (\ref{FJmu}):

\begin{proof}
The two-particle K-function of the current is
\[
K_{\underline{\alpha}}^{J}(\underline{\theta})=N_{2}^{J}\int_{\mathcal{C}%
_{\underline{\theta}}^{(o)}}dz\,\tilde{h}(\underline{\theta},\underline
{z})p^{J}(\underline{\theta},z)\,\tilde{\Psi}_{\underline{\alpha}}%
(\underline{\theta},z)
\]
with the p-function (\ref{pJ}) for $n=2$ and $m=1$%
\begin{equation}
p^{J}(\underline{\theta},\underline{\underline{z}})=\frac{e^{\frac{1}%
{2}\left(  \theta_{1}+\theta_{2}\right)  }}{e^{\theta_{1}}+e^{\theta_{2}}%
}+\frac{e^{-\frac{1}{2}\left(  \theta_{1}+\theta_{2}\right)  }}{e^{-\theta
_{1}}+e^{-\theta_{2}}}=\frac{1}{\cosh\frac{1}{2}\theta_{12}} \label{pJ2}%
\end{equation}
and the Bethe state%
\[
\tilde{\Psi}_{\underline{\alpha}}(\underline{\theta},z)=\delta_{\alpha_{1}%
}^{2}\delta_{\alpha_{2}}^{1}\tilde{c}(\theta_{1}-z)+\delta_{\alpha_{1}}%
^{1}\delta_{\alpha_{2}}^{2}\tilde{b}(\theta_{1}-z)\tilde{c}(\theta_{2}-z)\,.
\]
Doing the integral we obtain%
\begin{equation}
K_{21}^{J}(\theta_{12})=N_{2}^{J}\int_{\mathcal{C}_{\underline{\theta}}^{(o)}%
}dz\,\tilde{h}(\underline{\theta},z)p^{J}(\underline{\theta},z)\,\tilde{\Psi
}_{21}(\underline{\theta},z)=-N_{2}^{J}8\pi^{3}4^{\nu}\Gamma\left(
1-\nu\right)  c\frac{F_{-}\left(  \theta\right)  }{F\left(  \theta\right)  }
\label{Kj2}%
\end{equation}
and $K_{12}^{J}(\theta)=-K_{21}^{J}(\theta).$

We use again the variables $u=\theta/(i\pi\nu)$ and $v=z/(i\pi\nu)$, consider
the component $K_{21}^{J}(\theta)$ and calculate the integral%
\begin{align*}
I  &  =\frac{1}{2\pi i}\int_{\mathcal{C}_{_{\underline{u}}}^{(o)}%
}dv\,I(\underline{u},v)\\
I(\underline{u},v)  &  =\tilde{h}(\underline{u},v)\tilde{\Psi}(\underline
{u},v),~\tilde{h}(\underline{u},v)=\tilde{\phi}(u_{1}-v)\tilde{\phi}%
(u_{2}-v),~\tilde{\Psi}(\underline{u},v)=\tilde{c}(u_{1}-v)\,.
\end{align*}
Writing the integrals in terms of sums over residues we obtain (see Fig.
\ref{f5.1})%
\begin{align}
I  &  =I_{1}+I_{2}=\sum_{l=0}^{\infty}s_{1}(u_{1},u_{2},l)+\sum_{l=0}^{\infty
}s_{2}(u_{1},u_{2},l)\label{Ix}\\
s_{i}(u_{1},u_{2},l)  &  =\operatorname*{Res}_{v=v_{o}(u_{i},l)}I(u_{1}%
,u_{2},v),~v_{o}(u,l)=u-1+2l/\nu\,.\nonumber
\end{align}
Using the Gauss formula%
\begin{equation}
_{2}F_{1}(a,b;c;1)=\sum_{n=0}^{\infty}\frac{\Gamma\left(  a+n\right)  }%
{\Gamma\left(  a\right)  }\frac{\Gamma\left(  b+n\right)  }{\Gamma\left(
b\right)  }\frac{\Gamma\left(  c\right)  }{\Gamma\left(  c+n\right)  }%
\frac{\Gamma\left(  1\right)  }{\Gamma\left(  1+n\right)  }=\frac
{\Gamma\left(  c\right)  \Gamma\left(  c-a-b\right)  }{\Gamma\left(
c-a\right)  \Gamma\left(  c-b\right)  } \label{Gauss}%
\end{equation}
we get
\begin{align*}
I  &  =I_{1}+I_{2}\\
&  =2^{\nu}\pi\sqrt{\pi}\frac{\Gamma\left(  -\frac{1}{2}\nu\right)
\Gamma\left(  \frac{1}{2}\nu+\frac{1}{2}\right)  \cos\frac{1}{2}\pi\nu}%
{\sin\frac{1}{2}\pi\nu\left(  u_{12}+1\right)  \sin\frac{1}{2}\pi\nu\left(
u_{12}-1\right)  }\frac{1}{\Gamma\left(  \frac{1}{2}\nu+\frac{1}{2}\nu
u_{12}\right)  \Gamma\left(  1+\frac{1}{2}\nu-\frac{1}{2}\nu u_{12}\right)  }%
\end{align*}
which agrees with (\ref{Kj2}) taking (\ref{pJ2}) into account. Therefore using
(\ref{Fmin}) and (\ref{Ffull}) we finally obtain with the normalization
constant%
\begin{equation}
N_{2}^{J}=\frac{1}{8}4^{-\nu}\frac{im}{c\pi^{3}\Gamma\left(  1-\nu\right)  }
\label{NJ2}%
\end{equation}
for the pseudo-potential $J^{\alpha\beta}(x)$ the two-particle form factors
(\ref{FJ}) and (\ref{FJmu}) for the current. The normalization is chosen such
that the form factor agrees for $F_{-}\left(  \theta\right)  \rightarrow
F_{-}\left(  i\pi\right)  =1$ with the free field expression.
\end{proof}

\section{Higher level K-functions}

\subsection{Proof of lemma \ref{L1}}

\label{se}

\begin{remark}
If in (\ref{iiik}) $\operatorname*{Res}$ is replaced by $\overset
{(1)}{\operatorname*{Res}}$ Lemma \ref{L1} also holds for $k=0$ as explained
in appendix \ref{sd}.
\end{remark}

For the discussion of the general $k$-level Bethe Ansatz it is convenient to
use the variables $u,v$ defined by $\theta=i\pi\nu_{k}u,~z=i\pi\nu_{k}v$ and
$\nu_{k}=2/(N-2k-2)$ (for the S-matrix see (\ref{Su})). In the proof we will
replace $p^{(k)}(\underline{u},\underline{v})$ by $1$ which will not change
the results, if the $p^{(k)}$ satisfy the conditions (\ref{pk}).

\begin{proof}
As above in the proof of theorem \ref{TN} the relations (i)$^{(k)}$ and
(ii)$^{(k)}$ follow from the results of \cite{BFK5}. The proof of
(iii)$^{(k)}$ is the same as the corresponding one in \cite{BFK7}, only the
functions $\tilde{\psi}(u)$ and $\tilde{\chi}(u)$ have to replaced by
$\tilde{\phi}(u)$ and $\tau_{ij}(v)$ by $\tau(v)$. As in \cite{BFK7} one
finally obtains%
\begin{align*}
&  \operatorname*{Res}_{u_{12}=1/\nu_{k}}K_{\underline{\alpha}}^{(k)}%
(\underline{u})\\
&~~  =const.\left(  \operatorname*{Res}_{v=1/\nu_{k+1}}\tilde{d}_{k+1}%
(v)\right)  ^{-1}\operatorname*{Res}_{u_{12}=1/\nu_{k}}\oint_{u_{1}-1}%
dv_{1}\tilde{c}(u_{1}-v_{1})\left(  -\oint_{u_{2}}\right)  dv_{2}\\
&~~~~~  \times\tilde{d}_{k}(u_{12})\left(  \prod_{i=1}^{2}\prod_{j=1}^{2}%
\tilde{\phi}(u_{i}-v_{j})\right)  \tau(v_{12})\prod_{i=3}^{n_{k}}\tilde{\phi
}(u_{i1}+1)\tilde{\phi}(u_{i2})\frac{1}{\tilde{N}_{m_{k}-2}^{(k)}}%
\mathbf{C}_{\alpha_{1}\alpha_{2}}K_{\underline{\check{\alpha}}}^{(k)}%
(\underline{\check{u}})\,
\end{align*}
with $\underline{\check{\alpha}}=\alpha_{3}\ldots\alpha_{n_{k}}$. It has been
used that for $u_{12}=1/\nu_{k},~v_{12}=1/\nu_{k+1},~u_{2}=v_{2},~u_{1}%
=v_{2}+1/\nu_{k}=v_{1}+1$%
\[
\left(  \frac{a_{k+1}(v_{1j})a_{k+1}(v_{2j})}{a_{k}(u_{1}-v_{j})a_{k}%
(u_{2}-v_{j})}\tilde{\phi}(v_{j1}+1)\tilde{\phi}(v_{j2})\right)  \left(
\tilde{\phi}(u_{1}-v_{j})\tilde{\phi}(u_{2}-v_{j})\tau(v_{1j})\tau
(v_{2j})\right)  =1\,.
\]
This can be shown by means of (\ref{shiftphi}) and the formulas
\begin{align*}
a_{k}(u_{1})a_{k}(u_{2}) &  =\tilde{b}(-u_{2})/\tilde{b}(u_{1})\\
\tilde{b}(u)\tilde{\phi}(u) &  =-\tilde{\phi}(1-u)\,.
\end{align*}
The final result is that equation (\ref{iiik}) holds for a suitable choice of
the normalization constants in (\ref{Kk}).
\end{proof}

\subsection{Two-particle higher level K-functions}

\label{sf}We need higher level K-functions for the examples of section
\ref{s5}, in particular, in the iso-scalar two-particle channel (with weights
$w=(0,\dots,0)$) the K-function $K_{\alpha_{1}\alpha_{2}}^{(k)}(\theta
_{1},\theta_{2})$ (level $k=0,1,2,\dots$) belonging to $O(N-2k)$ . It is of
the form%
\begin{equation}
K_{\alpha_{1}\alpha_{2}}^{(k)}(u_{1},u_{2})=\mathbf{C}_{\alpha_{1}\alpha_{2}%
}^{(N-2k)}K(u_{12},k) \label{k0}%
\end{equation}
where $\mathbf{C}_{\alpha_{1}\alpha_{2}}^{(N-2k)}$ is the $O(N-2k)$ charge
conjugation matrix\footnote{In the real basis this would be $\delta
_{\alpha_{1}\alpha_{2}}$.}. From the weight vector%
\[
w=(w_{1},\dots,w_{N/2})=(0,\dots,0)=\left(  n-n_{1},\dots,n_{N/2-2}%
-n_{-}-n_{+},n_{-}-n_{+}\right)
\]
follows that for all levels $n_{k}=2$.

\begin{lemma}
\label{l2a}The vector valued functions $K_{\alpha_{1}\alpha_{2}}^{(k)}%
(u_{1},u_{2})$ with
\begin{equation}
K(u,k)=\frac{\Gamma\left(  1-\frac{1}{2}\nu-\frac{1}{2}\nu u\right)
\Gamma\left(  -\frac{1}{2}\nu+\frac{1}{2}\nu u\right)  }{\Gamma\left(
\frac{3}{2}-\frac{1}{2}k\nu-\frac{1}{2}\nu u\right)  \Gamma\left(  \frac{1}%
{2}-\frac{1}{2}k\nu+\frac{1}{2}\nu u\right)  } \label{K}%
\end{equation}
satisfy for $k=0,1,2,\dots<N/2-2$ the recursion relation%
\begin{align}
K_{\underline{\alpha}}^{(k)}\left(  \underline{u}\right)   &  =N^{(k)}%
\int_{\mathcal{C}_{\underline{u}}^{(o)}}dv_{1}\int_{\mathcal{C}_{\underline
{u}}^{(e)}}dv_{2}\,\tilde{h}(\underline{u},\underline{v})\,L_{\underline
{\mathring{\beta}}}^{(k)}(\underline{v},k)\tilde{\Phi}^{(k)}\,_{\underline
{\alpha}}^{\underline{\mathring{\beta}}}\,(\underline{u},\underline
{v})\label{rec}\\
L_{\underline{\mathring{\beta}}}^{(k)}(\underline{v},k)  &  =K_{\underline
{\mathring{\beta}}}^{(k+1)}(\underline{v})=\mathbf{C}_{\underline
{\mathring{\beta}}}^{(N-2k-2)}K(v_{12},k+1)\nonumber
\end{align}
with%
\begin{align*}
\tilde{h}(\underline{u},\underline{v})  &  =\prod_{i=1}^{2}\left(  \tilde
{\phi}(u_{i}-v_{1})\tilde{\phi}(u_{i}-v_{2})\right)  \frac{1}{\tilde{\phi
}(v_{12})\tilde{\phi}(-v_{12})}\\
\tilde{\Phi}^{(k)}\,_{\underline{\alpha}}^{\underline{\mathring{\beta}}%
}\,(\underline{u},\underline{v})  &  =\Big(\Pi_{\underline{\beta}}%
^{\underline{\mathring{\beta}}}(\underline{v})\Omega\tilde{T}_{1}^{\beta_{2}%
}(\underline{u},v_{2})\tilde{T}_{1}^{\beta_{1}}(\underline{u},v_{1}%
)\Big)_{\underline{\alpha}}^{(k)}%
\end{align*}
and the normalization%
\begin{equation}
N^{(k)}=\frac{-2^{-\nu}}{8\pi^{2}}\frac{\Gamma\left(  \frac{1}{2}-\frac{1}%
{2}k\nu\right)  \Gamma\left(  1-\frac{1}{2}k\nu-\frac{1}{2}\nu\right)
}{\Gamma\left(  1-\frac{1}{2}k\nu\right)  \Gamma\left(  \frac{1}{2}-\frac
{1}{2}k\nu+\frac{1}{2}\nu\right)  \left(  \Gamma\left(  -\frac{1}{2}%
\nu\right)  \right)  ^{2}}\,. \label{Nk}%
\end{equation}

\end{lemma}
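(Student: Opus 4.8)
The plan is to verify the closed form (\ref{K}) by substituting the level-$(k+1)$ function $L_{\underline{\mathring{\beta}}}^{(k)}(\underline{v},k)=K_{\underline{\mathring{\beta}}}^{(k+1)}(\underline{v})=\mathbf{C}_{\underline{\mathring{\beta}}}^{(N-2k-2)}K(v_{12},k+1)$ into the right-hand side of the recursion (\ref{rec}) and showing that the double contour integral reproduces $\mathbf{C}_{\alpha_{1}\alpha_{2}}^{(N-2k)}K(u_{12},k)$, simultaneously fixing the normalization (\ref{Nk}). This is the two-variable analogue of the single-integral current computation carried out in appendix \ref{sh}, and I would follow the same template, reducing the matrix problem to a scalar one and then evaluating the integral as a sum of residues.

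First I would make the integrand fully explicit. For the iso-scalar weight $w=(0,\dots,0)$ one has $n_{k}=2$ at every level, so the two-particle co-vector $\tilde{\Phi}^{(k)}{}_{\underline{\alpha}}^{\underline{\mathring{\beta}}}(\underline{u},\underline{v})$ is built from $\Pi_{\underline{\beta}}^{\underline{\mathring{\beta}}}(\underline{v})\,\Omega\,\tilde{T}_{1}^{\beta_{2}}(\underline{u},v_{2})\tilde{T}_{1}^{\beta_{1}}(\underline{u},v_{1})$. Using the explicit $O(N-2k)$ amplitudes $\tilde{b},\tilde{c},\tilde{d}_{k}$ of (\ref{Su}) together with the component rules (\ref{Pi}) for $\Pi$, and contracting the free upper indices $\underline{\mathring{\beta}}$ against $\mathbf{C}^{(N-2k-2)}$, I expect the whole integrand to collapse onto the single tensor structure $\mathbf{C}_{\alpha_{1}\alpha_{2}}^{(N-2k)}$ multiplied by a scalar function of $u_{1},u_{2},v_{1},v_{2}$. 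Combining this scalar with $\tilde{h}(\underline{u},\underline{v})$ and with the factor $K(v_{12},k+1)$ then turns (\ref{rec}) into a purely scalar two-fold integral.

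Next I would evaluate that integral over the contours $\mathcal{C}_{\underline{u}}^{(o)}$ and $\mathcal{C}_{\underline{u}}^{(e)}$ of Fig.~\ref{f5.1}. As in (\ref{Ix}) each contour is closed on the poles of the $\Gamma$-factors in $\tilde{\phi}$ (and the S-matrix pole of $\tilde{c},\tilde{d}_{k}$), so the integral becomes a double sum of residues at points of the form $v_{j}=u_{i}-1+2l/\nu$. The inner sum over one index is of hypergeometric type and is summed by the Gauss formula (\ref{Gauss}), exactly as in appendix \ref{sh}; the remaining sum over the second index is a balanced $_{3}F_{2}$ which I expect to evaluate by a Saalsch\"utz-type summation. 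After collecting the resulting $\Gamma$-functions I would read off the product form (\ref{K}) for $K(u_{12},k)$ and match overall constants to obtain $N^{(k)}$ in (\ref{Nk}).

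The main obstacle is precisely this double residue summation: unlike the current case of appendix \ref{sh}, where a single Gauss summation closes the computation, here one integration only reduces the problem to a nontrivial $_{3}F_{2}$, and one must identify the correct balanced hypergeometric identity that telescopes the remaining sum into the two numerator and two denominator $\Gamma$'s of (\ref{K}). A cleaner but less self-contained alternative, which I would keep in reserve, is to check directly that the explicit (\ref{K}) satisfies the functional and residue equations (\ref{ik})--(\ref{iiik}) by elementary $\Gamma$-function shifts, to observe that by Lemma \ref{L1} the integral side of (\ref{rec}) satisfies the same equations once $K^{(k+1)}$ does, and then to invoke uniqueness of the two-particle solution (up to normalization) together with a downward induction on $k$ to identify the two expressions; the gap in that route is justifying uniqueness from the difference equations and the prescribed analyticity and asymptotic behaviour.
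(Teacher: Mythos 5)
Your ``reserve'' route is, in fact, the paper's actual proof; your primary route is not, and its key step is exactly where it would get stuck. The paper's proof (appendix \ref{sf}) begins as you do: using (\ref{k0}) it restricts to the single component $\underline{\alpha}=\overline{k+1},k+1$, whereupon contracting $\tilde{\Phi}^{(k)}$ with $\mathbf{C}^{(N-2k-2)}$ collapses the matrix structure to an explicit rational function of $u_{1},u_{2},v_{1},v_{2}$, so that the right-hand side of (\ref{rec}) becomes $N^{(k)}(2\pi i)^{2}(N-2k-2)J(u_{12})$ with $J$ the scalar double contour integral (\ref{Ju}). But the integral is never evaluated by residue summation. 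Instead the paper verifies that $J(u)$ satisfies the two functional equations (\ref{i+ii}) and invokes the general argument of \cite{KW}: any solution of (\ref{i+ii}) with no zeros and exactly one pole (at $u=1$) in the strip $0\leq\operatorname{Re}u\leq1/\nu$ is proportional to $K(u,k)$. The proportionality constant, and hence $N^{(k)}$ of (\ref{Nk}), is then fixed by taking the residue at the bound-state pole $u=1$ on both sides of (\ref{JK}); that residue is accessible through a single pinching computation (cf.\ appendix \ref{siv}), which is far cheaper than evaluating the full integral.

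Two consequences. First, the obstacle you flag in your primary route is real and, as far as one can tell, unresolved: one Gauss summation (\ref{Gauss}) leaves a nonterminating ${}_{3}F_{2}$, and you do not identify a summation theorem that closes it. The paper's own treatment of closely analogous double integrals (the functions $f_{ij}$ of appendix \ref{siv}, and the footnote about generalized Meijer G-functions) leaves such ${}_{3}F_{2}$'s unsummed, which strongly suggests that no clean Saalsch\"utz-type identity is available here; this is presumably why the authors avoid the direct evaluation altogether. Second, the gap you identify in your fallback route --- justifying uniqueness from the difference equations together with the analyticity input --- is precisely the weak point of the published proof: the required pole/zero structure of $J(u)$ in the strip is supported in the paper only by a footnote citing numerical (Mathematica) checks, not by an analytic argument. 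So if you carry out the fallback route you reproduce the paper's argument, including its admitted reliance on numerics; also note that no downward induction on $k$ is needed, since at fixed $k$ the function $L^{(k)}$ is taken as the explicit closed form $K(v_{12},k+1)$ and the claim is verified level by level.
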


\begin{proof}
The function (\ref{K}) satisfies%
\begin{equation}%
\begin{array}
[c]{ccrcl}%
\text{(i)}: &  & K(u,k) & = & K(-u,k)\tilde{S}_{0}^{(k)}(u)\\
\text{(ii)}: &  & K(1/\nu-u,k) & = & K(1/\nu+u,k)
\end{array}
\label{i+ii}%
\end{equation}
with the scalar eigenvalue of $\tilde{S}^{(k)}(u)=\tilde{S}^{O(N-2k)}(u)$
\[
\tilde{S}_{0}^{(k)}(u)=S_{0}^{(k)}(u)/S_{+}^{(k)}(u)=\frac{u+1/\nu_{k}%
}{u-1/\nu_{k}}\frac{u+1}{u-1}=\frac{u+\left(  1/\nu-k\right)  }{u-\left(
1/\nu-k\right)  }\frac{u+1}{u-1}\,.
\]
The minimal solution (with no poles in the physical strip $0\leq
\operatorname{Re}u\leq1/\nu$) is
\begin{equation}
K_{m}(u,k)=\frac{1}{\Gamma\left(  \frac{3}{2}-\frac{1}{2}k\nu-\frac{1}{2}\nu
u\right)  \Gamma\left(  \frac{1}{2}\left(  1-\nu k\right)  +\frac{1}{2}\nu
u\right)  \Gamma\left(  1+\frac{1}{2}\nu-\frac{1}{2}\nu u\right)
\Gamma\left(  \frac{1}{2}\nu+\frac{1}{2}\nu u\right)  } \label{Km}%
\end{equation}
whereas $K(u,k)$ has the bound state pole at $u=1~(\theta=i\pi\nu_{k})$.

The Bethe state in (\ref{rec}) is (see \cite{BFK5})%
\begin{align*}
\tilde{\Phi}^{(k)}\,_{\underline{\alpha}}^{\underline{\mathring{\beta}}%
}\,(\underline{u},\underline{v})  &  =\left(  \Pi^{(k)}\right)  _{\underline
{\beta}}^{\underline{\mathring{\beta}}}(\underline{v})\Big(\Omega\tilde{T}%
_{1}^{\beta_{2}}(\underline{u},v_{2})\tilde{T}_{1}^{\beta_{1}}(\underline
{u},v_{1})\Big)_{\underline{\alpha}}^{(k)}\\
\left(  \Pi^{(k)}\right)  _{\beta_{1}\beta_{2}}^{\mathring{\beta}_{1}%
\mathring{\beta}_{2}}(\underline{v})  &  =\delta_{\beta_{1}}^{\mathring{\beta
}_{1}}\delta_{\beta_{2}}^{\mathring{\beta}_{2}}+f_{k}(v_{12})\mathbf{\mathring
{C}}^{\mathring{\beta}_{1}\mathring{\beta}_{2}}\delta_{\beta_{1}}%
^{\overline{k+1}}\delta_{\beta2}^{k+1}\,,~~f_{k}(v)\,=\frac{1}{v+1/\nu_{k}-1}%
\end{align*}
which may be depicted (for $k=0$) as%
\begin{equation}
\Phi_{\underline{\alpha}}^{\underline{\mathring{\beta}}}(\underline
{u},\underline{w})=~~~%
\begin{array}
[c]{c}%
\unitlength4mm\begin{picture}(10,8)(0,1) \thicklines\put(4,1.8){$u_1$} \put(8.3,1.8){$u_2$} \put(3.3,4.2){$v_2$} \put(2,2){$v_1$} \put(5.0,6.2){1} \put(8.0,6.2){1} \put(9.5,2.8){1} \put(9.5,4.8){1} \put(5,2){\line(0,1){4}} \put(8,2){\line(0,1){4}} \put(9,6){\oval(18,6)[lb]} \put(9,6){\oval(13,2)[lb]} \put(-.2,6){$\framebox(3,1.5){$\Pi$}$} \put(1.,8.2){$\underline{\mathring{\beta}}$} \put(6,1){$\underline{\alpha}$} \put(0,7.5){\line(0,1){1}} \put(2.5,7.5){\line(0,1){1}} \end{picture}
\end{array}
~~,~~~\Pi_{\alpha\beta}^{\mathring{\alpha}\mathring{\beta}}(u_{1},u_{2})=%
\begin{array}
[c]{c}%
\unitlength3.6mm\begin{picture}(2.5,6)\thicklines \put(0,1){\line(0,1){4}} \put(-.27,2.7){$\bullet$} \put(2,1){\line(0,1){4}} \put(1.73,2.7){$\bullet$} \put(-.2,0){$\alpha$} \put(1.8,0){$\beta$} \put(-.2,5.4){$\mathring{\alpha}$} \put(1.8,5.4){$\mathring{\beta}$} \end{picture}
\end{array}
+f(u_{12})~~%
\begin{array}
[c]{c}%
\unitlength3.6mm\begin{picture}(3.5,6)\thicklines \put(0,1){\line(0,1){2}} \put(2,1){\line(0,1){2}} \put(1,4){\line(1,1){1}} \put(1,4){\line(-1,1){1}} \put(-.2,0){$\alpha$} \put(1.8,0){$\beta$} \put(-.2,3.3){$\bar1$} \put(1.8,3.3){$1$} \put(-.2,5.4){$\mathring{\alpha}$} \put(1.8,5.4){$\mathring{\beta}$} \end{picture}
\end{array}
\,. \label{psi}%
\end{equation}
Because of (\ref{k0}) it is sufficient to consider only one component of
(\ref{rec}) and for convenience we take $\tilde{\Phi}^{(k)}\,_{\underline
{\alpha}}^{\underline{\mathring{\beta}}}$ with $\underline{\alpha}%
=\overline{k+1},k+1$ and define%
\begin{align*}
& \tilde{\Phi}_{k}(\underline{u},\underline{v})\\
& =\mathbf{C}_{\mathring{\beta}_{1}\mathring{\beta}_{2}}^{(N-2k-2)}\tilde
{\Phi}^{(k)}\,_{\overline{k+1},k+1}^{\mathring{\beta}_{1}\mathring{\beta}_{2}%
}\,(\underline{u},\underline{v})\\
& =\mathbf{C}_{\mathring{\beta}_{1}\mathring{\beta}_{2}}^{(N-2k-2)}%
\mathbf{C}_{(N-2k-2)}^{\mathring{\beta}_{1}\mathring{\beta}_{2}}\left(
\tilde{c}(u_{1}-v_{2})\tilde{d}_{k}(u_{1}-v_{1})+f_{k}(v_{1}-v_{2})\left(
\tilde{c}(u_{1}-v_{1})+\tilde{d}_{k}(u_{1}-v_{1})\right)  \right)  \\
& =(N-2k-2)\frac{-\left(  v_{1}-v_{2}-1\right)  }{\left(  u_{1}-v_{1}%
-1\right)  \left(  u_{1}-v_{2}-1\right)  \left(  v_{1}-v_{2}+1/\nu
_{k}-1\right)  }%
\end{align*}
where (for $k=0$) $\tilde{\Phi}\,_{\overline{1},1}^{\mathring{\beta}%
_{1}\mathring{\beta}_{2}}(\underline{u},\underline{v})$ may be depicted as%
\[%
\begin{array}
[c]{c}%
\unitlength3.5mm\begin{picture}(8,7)(16,4.5) \thicklines \put(18.8,4.3){$u_1$} \put(22.3,4.3){$u_2$} \put(20,3){$\bar 1$} \put(22,3){$1$} \put(16,10.6){$\mathring{\beta}_1$} \put(17.5,10.6){$\mathring{\beta}_2$} \put(17,5.4){$v_{2}$} \put(18.6,7.3){$v_1$} \put(24.2,5.6){1} \put(24.2,6.6){1} \put(19.7,8.1){1} \put(21.4,8.1){1} \put(21,4.5){\oval(2,3)[t]} \put(18.5,10){\oval(1,6)[lb]} \put(18.5,10){\oval(4,8)[lb]} \put(24,8){\oval(4,4)[lb]} \put(18.5,6.5){\oval(2,1)[r]} \put(24,8){\oval(8,2)[lb]} \end{picture}
\end{array}
~~~~+f(v_{12})~\Bigg(~~%
\begin{array}
[c]{c}%
\unitlength3.5mm\begin{picture}(8,7)(16,4.5) \thicklines \put(18.8,4.3){$u_1$} \put(22.3,4.3){$u_2$} \put(20,3){$\bar 1$} \put(22,3){$1$} \put(16,10.6){$\mathring{\beta}_1$} \put(17.5,10.6){$\mathring{\beta}_2$} \put(17.,10){\oval(1.5,2)[b]} \put(16.2,7.6){$\bar1$} \put(17.8,7.6){$1$} \put(17,5.1){$v_{2}$} \put(18.6,7.3){$v_1$} \put(24.2,5.6){1} \put(24.2,6.6){1} \put(19.7,8.1){1} \put(21.4,8.1){1} \put(21,4.5){\oval(2,3)[rt]} \put(21,6.5){\oval(2.,1)[lb]} \put(18.5,7.5){\oval(1,1)[lb]} \put(18.5,7.5){\oval(4,3)[lb]} \put(24,8){\oval(4,4)[lb]} \put(18.5,6.5){\oval(3,1)[rt]} \put(18.5,4.5){\oval(2.7,3)[rt]} \put(24,8){\oval(8,2)[lb]} \end{picture}
\end{array}
~~~+~%
\begin{array}
[c]{c}%
\unitlength3.5mm\begin{picture}(8,7)(16,4.5) \thicklines \put(18.8,4.3){$u_1$} \put(22.3,4.3){$u_2$} \put(20,3){$\bar 1$} \put(22,3){$1$} \put(16,10.6){$\mathring{\beta}_1$} \put(17.5,10.6){$\mathring{\beta}_2$} \put(17.,10){\oval(1.5,2)[b]} \put(16.2,7.6){$\bar1$} \put(17.8,7.6){$1$} \put(17,5.1){$v_{2}$} \put(18.6,7.3){$v_1$} \put(24.2,5.6){1} \put(24.2,6.6){1} \put(19.7,8.1){1} \put(21.4,8.1){1} \put(21,4.5){\oval(2,3)[t]} \put(18.5,7.5){\oval(1,1)[lb]} \put(18.5,7.5){\oval(4,3)[lb]} \put(24,8){\oval(4,4)[lb]} \put(18.5,6.5){\oval(2,1)[r]} \put(24,8){\oval(8,2)[lb]} \end{picture}
\end{array}
~~~\Bigg)\,.
\]
Then because $L_{\underline{\mathring{\beta}}}^{(k)}(\underline{v})\tilde
{\Phi}^{(k)}\,_{\overline{k+1},k+1}^{\underline{\mathring{\beta}}%
}\,(\underline{u},\underline{v})=K(v_{12},k+1)\tilde{\Phi}_{k}(\underline
{u},\underline{v})$
\begin{align}
K_{\overline{k+1},k+1}^{(k)}\left(  \underline{u}\right)   &  =N^{(k)}%
\int_{\mathcal{C}_{\underline{u}}^{(o)}}dv_{1}\int_{\mathcal{C}_{\underline
{u}}^{(e)}}dv_{2}\,\tilde{h}(\underline{u},\underline{v})\,K(v_{12},k+1)\tilde{\Phi}_{k}(\underline{u},\underline{v})\label{Kk1}\\
&  =N^{(k)}(2\pi i)^{2}(N-2k-2)J(u_{12})\nonumber
\end{align}
where%
\begin{align}
J(u_{12})  &  =\frac{1}{(2\pi i)^{2}}\int_{\mathcal{C}_{\underline{u}}^{(o)}%
}dv_{1}\int_{\mathcal{C}_{\underline{u}}^{(e)}}dv_{2}\,J(\underline
{u},\underline{v})\label{Ju}\\
J(\underline{u},\underline{v})  &  =\tilde{\phi}(u_{1}-v_{1})\tilde{c}%
(u_{1}-v_{1})\tilde{\phi}(u_{1}-v_{2})\tilde{c}(u_{1}-v_{2})\tilde{\phi}%
(u_{2}-v_{1})\tilde{\phi}(u_{2}-v_{2})\varphi(v_{12})\nonumber\\
\varphi(v)  &  =\frac{\left(  1-v\right)  K(v,k+1)}{\tilde{\phi}(v)\tilde
{\phi}(-v)\left(  v+1/\nu-k-1\right)  }.\nonumber
\end{align}
Because the function $J(u)$ satisfies (\ref{i+ii}) it is proportional to
$K(u,k)$ (as was shown in general in \cite{KW}) if there are no zeroes and
exactly one pole\footnote{This is suggested by numerical calculations using
mathematica.} at $u=1$ in $0\leq\operatorname{Re}u\leq1/\nu$. Finally we
obtain%
\begin{equation}
J(u)=\frac{2^{\nu-1}\nu\left(  \Gamma\left(  -\frac{1}{2}\nu\right)  \right)
^{2}\Gamma\left(  \frac{1}{2}-\frac{1}{2}k\nu+\frac{1}{2}\nu\right)
\Gamma\left(  1-\frac{1}{2}k\nu\right)  }{\Gamma\left(  -\frac{1}{2}%
\nu+1-\frac{1}{2}k\nu\right)  \Gamma\left(  \frac{3}{2}-\frac{1}{2}%
k\nu\right)  }\,K(u,k) \label{JK}%
\end{equation}
where the constant is calculated by taking the residue at $u=1$ on both sides
of (\ref{JK}). Finally we turn to (\ref{rec}). By (\ref{Kk1}) and (\ref{k0})
we have
\begin{equation}
K(u,k)=K_{\overline{k+1},k+1}^{(k)}\left(  \underline{u}\right)  =N^{(k)}(2\pi
i)^{2}(N-2k-2)J(u_{12})\, \label{rec1}%
\end{equation}
which provides the normalization (\ref{Nk}).
\end{proof}

In particular for $k=0$ and $k=1$%

\begin{align*}
K(u)  &  =K(u,0)=\frac{-2}{\pi}\frac{\cos\frac{1}{2}\pi\nu u}{\nu u-1}%
\Gamma\left(  -\tfrac{1}{2}\nu+\tfrac{1}{2}\nu u\right)  \Gamma\left(
1-\tfrac{1}{2}\nu-\tfrac{1}{2}\nu u\right) \\
L(u)  &  =K(u,1)=\frac{\Gamma\left(  1-\frac{1}{2}\nu-\frac{1}{2}\nu u\right)
\Gamma\left(  -\frac{1}{2}\nu+\frac{1}{2}\nu u\right)  }{\Gamma\left(
1+\frac{1}{2}\left(  1-\nu\right)  -\frac{1}{2}\nu u\right)  \Gamma\left(
\frac{1}{2}\left(  1-\nu\right)  +\frac{1}{2}\nu u\right)  }\,.
\end{align*}
The function $L(u)$ is that of (\ref{Lem}) and it is used to calculate the
2-particle form factor on the energy momentum (\ref{EMN}) and also to
calculate the 3-particle form factor of the field \ref{phi3'}. In particular
the 2-particle K-function of the scalar operator $\bar{\psi}\psi$ is up to a
constant equal to $K(u)$. With the normalization in (\ref{EM})%
\[
N_{2}^{\bar{\psi}\psi}=2m/\left(  \pi^{2}\nu^{2}c\Gamma^{2}\left(  \tfrac
{1}{2}\left(  1-\nu\right)  \right)  \right)
\]
we obtain (\ref{EMN})
\[
F_{\alpha_{1}\alpha_{2}}^{\bar{\psi}\psi}(\underline{\theta})=\mathbf{C}%
_{\alpha_{1}\alpha_{2}}\,\bar{v}(\theta_{1})u(\theta_{2})\,F_{0}(\theta_{12})
\]
which agrees with the result of \cite{KW}. The normalization is chosen such
that the form factor agrees for $\theta\rightarrow i\pi$ with the free field expression.

\section{Bound state form factors}

\label{siv}

We discuss the form factor equation (iv)%

\[
\operatorname*{Res}_{\theta_{12}=i\eta}F_{12\dots n}^{\mathcal{O}}(\theta
_{1},\theta_{2},\dots,\theta_{n})\,=F_{(12)\dots n}^{\mathcal{O}}%
(\theta_{(12)},\dots,\theta_{n})\,\sqrt{2}\Gamma_{12}^{(12)}%
\]
for the examples of section (\ref{s5}). Of course, one may easily calculate
the residues for two-particle form factors for the pseudo-potential
$J^{\alpha\beta}(x)$ (\ref{FJ}) and $\bar{\psi}\psi(x)$ (\ref{EMN}) directly,
however we will check here whether the general pinching procedure of appendix
\ref{sd} will give the same result. In addition we obtain the bound state form
factor of the three-particle form factor for the field.

\paragraph{Two-particle current form factor:}

By the form factor equation (iv) (\ref{1.16}) the two-particle bound state
form factor for the pseudo-potential $J^{\alpha\beta}(x)$ is%
\[
\operatorname*{Res}_{\theta_{12}=i\pi\nu}F_{12}^{J^{\alpha\beta}}(\theta
_{1},\theta_{2})\,=F_{(12)}^{J^{\alpha\beta}}(\theta_{(12)})\,\sqrt{2}%
\Gamma_{12}^{(12)}\,,~~\theta_{(12)}=\tfrac{1}{2}\left(  \theta_{1}+\theta
_{2}\right)
\]
where the bound state intertwiner $\Gamma_{12}^{(12)}$ is given by
(\ref{b1.40}) and (\ref{b1.43}).

In appendix \ref{sh} we calculated the two-particle form factor for the
pseudo-potential $J^{\alpha\beta}(x)$ in terms of the integral%
\begin{align*}
I(u_{12})  &  =\frac{1}{2\pi i}\int_{\mathcal{C}_{_{\underline{u}}}^{(o)}%
}dvI(\underline{u},\underline{v})\\
I(\underline{u},v)  &  =\tilde{h}(\underline{u},v)\tilde{\Psi}(\underline
{u},v),~\tilde{h}(\underline{u},v)=\tilde{\phi}(u_{1}-v)\tilde{\phi}%
(u_{2}-v),~\tilde{\Psi}(\underline{u},v)=\tilde{c}(u_{1}-v)\,
\end{align*}
with the result
\[
I(u)=\frac{2^{\nu}\pi\sqrt{\pi}\Gamma\left(  -\frac{1}{2}\nu\right)
\Gamma\left(  \frac{1}{2}\nu+\frac{1}{2}\right)  \cos\frac{1}{2}\pi\nu}%
{\sin\frac{1}{2}\pi\nu\left(  u+1\right)  \sin\frac{1}{2}\pi\nu\left(
u-1\right)  \Gamma\left(  \frac{1}{2}\nu+\frac{1}{2}\nu u\right)
\Gamma\left(  1+\frac{1}{2}\nu-\frac{1}{2}\nu u\right)  }%
\]
and the residue%
\[
\operatorname*{Res}_{u=1}I(u)=-\left(  \Gamma\left(  -\frac{1}{2}\nu\right)
\right)  ^{2}.
\]

In appendix \ref{sd} we remarked that the residue is obtained from pinching
at:\newline$z=\theta_{1}-i\pi\nu\approx\theta_{2}$ for $\mathcal{C}^{(o)}$
\begin{align*}
\operatorname*{Res}_{u_{12}=1}I(u_{12})  &  =\operatorname*{Res}_{u_{12}%
=1}\frac{1}{2\pi i}\oint\limits_{u_{1}-1}dv\,\tilde{\phi}(u_{1}-v)\tilde{\phi
}(u_{2}-v)\tilde{c}(u_{1}-v)\\
&  =\operatorname*{Res}_{u_{12}=1}\tilde{\phi}(1)\tilde{\phi}(-u_{12}%
+1)=-\left(  \Gamma\left(  -\frac{1}{2}\nu\right)  \right)  ^{2}%
\end{align*}
which means that the pinching procedure gives the same result as the direct calculation.

\paragraph{Two-particle form factor of $\bar{\psi}\psi$:}

By the form factor equation (iv) (\ref{1.16}) the two-particle bound state
form factor for $\bar{\psi}\psi$ is%
\[
\operatorname*{Res}_{\theta_{12}=i\pi\nu}F_{12}^{\bar{\psi}\psi}(\theta
_{1},\theta_{2})\,=F_{(12)}^{\bar{\psi}\psi}(\theta_{(12)})\,\sqrt{2}%
\Gamma_{12}^{(12)}\,.
\]
In appendix \ref{sf} we calculated the two-particle form factor for $\bar
{\psi}\psi(x)$ in terms of the integral%
\begin{align*}
J(u_{12})  &  =\frac{1}{(2\pi i)^{2}}\int_{\mathcal{C}_{\underline{u}}^{(o)}%
}dv_{1}\int_{\mathcal{C}_{\underline{u}}^{(e)}}dv_{2}\,J(\underline
{u},\underline{v})\\
J(\underline{u},\underline{v})  &  =\tilde{\phi}(u_{1}-v_{1})\tilde{c}%
(u_{1}-v_{1})\tilde{\phi}(u_{1}-v_{2})\tilde{c}(u_{1}-v_{2})\tilde{\phi}%
(u_{2}-v_{1})\tilde{\phi}(u_{2}-v_{2})\varphi(v_{12})\\
\varphi(v)  &  =\frac{\left(  1-v\right)  K(v,k+1)}{\tilde{\phi}(v)\tilde
{\phi}(-v)\left(  v+1/\nu-k-1\right)  }%
\end{align*}
with the result
\[
J=\frac{c_{2}}{\cos\pi\nu}K(u,0)=\frac{c_{2}}{\cos\pi\nu}\frac{\Gamma\left(
1-\frac{1}{2}\nu-\frac{1}{2}\nu u\right)  \Gamma\left(  -\frac{1}{2}\nu
+\frac{1}{2}\nu u\right)  }{\Gamma\left(  1+\frac{1}{2}-\frac{1}{2}\nu
u\right)  \Gamma\left(  \frac{1}{2}+\frac{1}{2}\nu u\right)  }%
\]
and the residue is%
\[
\operatorname*{Res}_{u=1}J(u)=\frac{c_{2}}{\cos\pi\nu}\operatorname*{Res}%
_{u=1}K(u,0)=\frac{4}{\left(  1-\nu\right)  \pi}\left(  \Gamma\left(
-\tfrac{1}{2}\nu\right)  \right)  ^{2}.
\]

In appendix \ref{sd} we remarked that the residue is obtained from pinching
at:\newline$z=\theta_{1}-i\pi\nu\approx\theta_{2}$ for $\mathcal{C}^{(o)}$ and
$z_{j}=\theta_{2}\approx\theta_{1}-i\pi\nu$ for $\mathcal{C}^{(e)}$, therefore
(see (\ref{Ju}))%
\begin{align*}
\operatorname*{Res}_{u_{12}=1}J  &  =\operatorname*{Res}_{u_{12}=1}\frac
{1}{(2\pi i)^{2}}\left(  ~\oint\limits_{u_{1}-1}dv_{1}\int_{\mathcal{C}%
_{\underline{u}}^{(e)}}dv_{2}-\int_{\mathcal{C}_{\underline{u}}^{(o)}}%
dv_{1}\oint\limits_{u_{2}}dv_{2}\right)  J(\underline{u},\underline{v}%
)=R_{1}+R_{2}\\
R_{1}  &  =-\operatorname*{Res}_{u_{12}=1}\sum\limits_{l_{2}=0}^{\infty
}\left(  s_{11}(u_{1},u_{2},0,l_{2})+s_{12}(u_{1},u_{2},0,l_{2})\right)  \,.
\end{align*}
with%
\begin{gather*}
s_{ij}(u_{1},u_{2},l_{1},l_{2})=\operatorname*{Res}_{v_{1}=v_{o}(u_{i},l_{1}%
)}\operatorname*{Res}_{v_{2}=v_{e(u_{j},l_{2})}}J(u_{1},u_{2};v_{1},v_{2})\\
v_{o}(u,l)=u-1+2l/\nu,~v_{e}(u,l)=u-2l/\nu
\end{gather*}
It turns out that $s_{12}$ gives no contribution and
\[
R_{1}=-\operatorname*{Res}_{u_{12}=1}\sum\limits_{l_{2}=0}^{\infty}%
s_{11}(u_{1},u_{2},0,l_{2})=2\frac{\left(  \Gamma\left(  -\frac{1}{2}%
\nu\right)  \right)  ^{2}}{\pi\left(  1-\nu\right)  }%
\]
such that again%
\[
\operatorname*{Res}_{u_{12}=1}J=4\frac{\left(  \Gamma\left(  -\frac{1}{2}%
\nu\right)  \right)  ^{2}}{\pi\left(  1-\nu\right)  }%
\]
which means that the pinching procedure gives the same result as the direct calculation.

\paragraph{3-particle form factor of $\psi$:}

We discuss the bound state fusion of 2 fundamental fermions $f+f\rightarrow
b_{2}$. We write (\ref{chi}) as
\[
\psi(x)=(i\gamma\partial+m)\tilde{\chi}{(}x{),~~}\tilde{\chi}{(}x{)=-i}\left(
\square+m^{2}\right)  ^{-1}\chi{(}x{)}%
\]
and apply the form factor equation (\ref{1.16}) to\footnote{Strictly
 speaking $F_{1\bar{1}1}^{\tilde{\chi}}\pm F_{\bar{1}%
11}^{\tilde{\chi}}$ give $F_{b_{2}^{(0,2)}1}^{\tilde{\chi}}$.} $\tilde{\chi}$
\[
\operatorname*{Res}\limits_{\theta_{12}=i\pi\nu}F_{1\bar{1}1}^{\tilde{\chi}%
}(\underline{\theta})=F_{b_{2}1}^{\tilde{\chi}^{(\pm)}}(\theta_{0},\theta
_{3})\sqrt{2}\Gamma_{1\bar{1}}^{b_{2}}\,.
\]
The component $K_{1\bar{1}1}^{\tilde{\chi}}$ of the K-function (similar as for
$\bar{\psi}\psi$ in appendix \ref{sf}) can be written in terms of
\begin{align*}
J^{\chi}(\underline{u})  &  =\frac{1}{(2\pi i)^{2}}\int_{\mathcal{C}%
_{\underline{u}}^{(o)}}dv_{1}\int_{\mathcal{C}_{\underline{u}}^{(e)}}%
dv_{2}\,J^{\chi}(\underline{u},\underline{v})p^{\chi}(\underline{u}%
,\underline{v})\\
J^{\chi}(\underline{u},\underline{v})  &  =\left(  \prod_{i=1}^{3}\prod
_{j=1}^{2}\tilde{\phi}(u_{i}-v_{j})\right)  \tilde{b}(u_{1}-v_{1})\tilde
{b}(u_{1}-v_{2})\tilde{c}(u_{2}-v_{1})\tilde{c}(u_{2}-v_{2})\varphi(v_{12})\\
\varphi(v)  &  =\frac{\left(  1-v\right)  K(v,1)}{\tilde{\phi}(v)\tilde{\phi
}(-v)\left(  v+1/\nu-1\right)  }\,.
\end{align*}
In appendix \ref{sd} we remarked that the residue is obtained from pinching
at:\newline$z_{1}=\theta_{1}-i\pi\nu\approx\theta_{2}~(v_{1}=u_{1}-1\approx
u_{2})$ for $\mathcal{C}^{(o)}$ and $z_{2}=\theta_{2}\approx\theta_{1}-i\pi
\nu~(v_{2}=u_{2}\approx u_{1}-1)$ for $\mathcal{C}^{(e)}$. Therefore the bound
state form factor is obtained from%
\[
\operatorname*{Res}_{u_{12}=1}J^{\chi}(\underline{u})=\operatorname*{Res}%
_{u_{12}=1}\frac{1}{(2\pi i)^{2}}\left(  \oint_{u_{1}-1}\int_{\mathcal{C}%
_{\underline{u}}^{(e)}}-\int_{\mathcal{C}_{\underline{u}}^{(o)}}\oint_{u_{2}%
}\right)  dv_{1}\,dv_{2}J^{\chi}(\underline{u},\underline{v})p^{\chi
}(\underline{u},\underline{v})\,.
\]
The integrals may be calculated in terms of hypergeometric functions
$_{3}F_{2}$. We obtain%
\begin{multline*}
F_{b_{2}1}^{\tilde{\chi}^{(\pm)}}(\theta_{0},\theta_{3})=e^{\mp\frac{1}%
{2}\theta_{0}}\left(  e^{\pm\frac{1}{4}i\pi\nu}f_{13}(\theta_{03})+e^{\mp
\frac{1}{4}i\pi\nu}f_{32}(\theta_{03})\right) \\
+e^{\mp\frac{1}{2}\theta_{3}}\left(  e^{\pm\frac{1}{2}i\pi\nu}f_{11}%
(\theta_{03})+e^{\mp\frac{1}{2}i\pi\nu}f_{22}(\theta_{03})\right)
\end{multline*}
where $f_{1i}(\theta_{03})$ and $f_{i2}(\theta_{03})$ are the results from the
integrations%
\[
\oint_{u_{1}-1}dv_{1}\int_{\mathcal{C}_{u_{i}}}\,dv_{2}..\dots~\text{and
}\oint_{\mathcal{C}_{u_{i}}}dv_{1}\int_{u_{2}}\,dv_{2}..\dots
~\text{respectively.}%
\]
For example up to a constant (see Fig. \ref{f13a})$\,$%
\begin{align*}
f_{13}(u)  &  =\frac{\left(  \Gamma\left(  1-\frac{1}{4}\nu-\frac{1}{2}\nu
u\right)  \right)  ^{2}\Gamma\left(  -\frac{3}{4}\nu+\frac{1}{2}\nu u\right)
\Gamma\left(  -\frac{1}{4}\nu+\frac{1}{2}\nu u\right)  }{\Gamma\left(
\frac{3}{2}-\frac{3}{4}\nu+\frac{1}{2}\nu u\right)  \Gamma\left(  \frac{3}%
{2}-\frac{1}{4}\nu-\frac{1}{2}\nu u\right)  \cot\frac{1}{2}\pi\nu\left(
u-\frac{1}{2}\right)  \cot\frac{1}{2}\pi\nu\left(  u+\frac{1}{2}\right)  }\\
&  \times\,_{3}F_{2}\left(  -\tfrac{1}{2}\nu+1,-\tfrac{3}{4}\nu+\tfrac{1}%
{2}\nu u,-\tfrac{1}{2}+\tfrac{1}{4}\nu+\tfrac{1}{2}\nu u;\tfrac{1}{4}%
\nu+\tfrac{1}{2}\nu u,\tfrac{3}{2}-\tfrac{3}{4}\nu+\tfrac{1}{2}\nu u;1\right)
F_{b}(u)
\end{align*}
where $F_{b}(u)$ is the minimal highest weight form factor function in the
$b_{2}^{(r)}+f$ sector
\[
F_{b}(\theta)=const.\left(  \sinh\tfrac{1}{2}\theta\right)  \frac{F_{+}^{\min
}(\theta+\tfrac{1}{2}i\pi\nu)F_{+}^{\min}(\theta-\tfrac{1}{2}i\pi\nu)}%
{\Gamma(1+\tfrac{1}{4}\nu-\frac{\theta}{2i\pi})\Gamma(\tfrac{1}{4}\nu
+\frac{\theta}{2i\pi})}%
\]
or explicitly in terms of $G\left(  z\right)  $ Barnes G-function%
\[
F_{b}(u)=\frac{\left(  \sinh\frac{1}{2}\theta\right)  G\left(  \frac{1}{4}%
\nu+\frac{1}{2}\nu u\right)  G\left(  \frac{3}{2}-\frac{3}{4}\nu-\frac{1}%
{2}\nu u\right)  G\left(  1+\frac{1}{4}\nu-\frac{1}{2}\nu u\right)  G\left(
\frac{1}{2}-\frac{3}{4}\nu+\frac{1}{2}\nu u\right)  }{G\left(  \frac{1}%
{2}+\frac{1}{4}\nu+\frac{1}{2}\nu u\right)  G\left(  2-\frac{3}{4}\nu-\frac
{1}{2}\nu u\right)  G\left(  \frac{3}{2}+\frac{1}{4}\nu-\frac{1}{2}\nu
u\right)  G\left(  1-\frac{3}{4}\nu+\frac{1}{2}\nu u\right)  }%
\]
with $u=\theta/(i\pi\nu)$. It satisfies Watson's equation%
\[
\frac{F_{b}(\theta)}{F_{b}(-\theta)}=a(\theta+\tfrac{1}{2}i\pi\nu
)a(\theta-\tfrac{1}{2}i\pi\nu)\frac{\theta+\tfrac{1}{2}i\pi\nu}{\theta
-\tfrac{1}{2}i\pi\nu}=a_{b}(\theta)
\]
where $a_{b}(\theta)$ is the highest weight scattering amplitude in the
$b_{2}^{(r)}+f$ sector.

\section{1/N expansion}

\subsection{1/N expansion of the exact 3-particle field form factor}

\label{sg}

For $\chi^{\delta}{(}x{)}=i(-i\gamma\partial+m)\psi^{\delta}(x)$ we derive for
the highest weight component $\chi{(}x{)=}\chi^{1}{(}x{)}$%
\begin{equation}
F_{\bar{1}11}^{\chi}(\underline{\theta})=\frac{8\pi m}{N}\,\left(  \frac
{\cosh\frac{1}{2}\theta_{12}}{\theta_{12}-i\pi}\,u(\theta_{3})-\frac
{\cosh\frac{1}{2}\theta_{13}}{\theta_{13}-i\pi}\,u(\theta_{2})\right)
+O(N^{-2}) \label{F3}%
\end{equation}
which is equivalent to (\ref{F3g}).

\begin{proof}
The p-function of $\chi{(}x{)}$ for three particles and $\nu=0$ is%
\[
p^{\chi^{(\pm)}}=\exp\left(  \mp\tfrac{1}{2}\left(  \theta_{1}+\theta
_{2}+\theta_{3}-z_{1}-z_{2}\right)  \right)  .
\]
We have to consider (up to const.)
\[
K_{\bar{1}11}^{\chi^{(+)}}(\underline{\theta})=\int_{\mathcal{C}%
_{\underline{\theta}}}dz_{1}\int_{\mathcal{C}_{\underline{\theta}}}%
dz_{2}\,\prod_{i=1}^{3}\left(  \tilde{\phi}(\theta_{i}-z_{1})\tilde{\phi
}(\theta_{i}-z_{2})\right)  \frac{1}{\tilde{\phi}(z_{12})\tilde{\phi}%
(-z_{12})}p^{\chi^{(+)}}(\underline{z})\,\tilde{\Psi}_{\bar{1}11}%
(\underline{\theta},\underline{z}).
\]
This formula is similar as (\ref{rec}) for $k=0$ (which correspond to the
operator $\bar{\psi}\psi$), only we have here to add the factor $\left(
\tilde{\phi}(\theta_{3}-z_{1})\tilde{\phi}(\theta_{3}-z_{2})p^{\chi^{(\pm)}%
}(\underline{\theta},\underline{z})\right)  $. Therefore we get using
(\ref{rec1}) for small $\nu$ (up to constants)%
\begin{align*}
K_{\bar{1}11}^{\chi^{(\pm)}}(\underline{\theta})  &  =K(\theta_{12}%
,0)\frac{\exp\left(  \mp\tfrac{1}{2}\theta_{3}\right)  }{\sinh\frac{1}%
{2}\theta_{13}\sinh\frac{1}{2}\theta_{23}}+(2\leftrightarrow3)\\
&  =\frac{\cosh\frac{1}{2}\theta_{12}}{\left(  \theta_{12}-i\pi\right)
\sinh\frac{1}{2}\theta_{12}}\frac{\exp\left(  \mp\tfrac{1}{2}\theta
_{3}\right)  }{\sinh\frac{1}{2}\theta_{13}\sinh\frac{1}{2}\theta_{23}%
}+(2\leftrightarrow3)+O(\nu)\\
&  =\frac{1}{\theta_{12}-i\pi}\coth\tfrac{1}{2}\theta_{12}\frac{\exp\left(
\mp\tfrac{1}{2}\theta_{3}\right)  }{\sinh\frac{1}{2}\theta_{13}\sinh\frac
{1}{2}\theta_{23}}+(2\leftrightarrow3)+O(\nu)\\
F_{\bar{1}11}^{\chi}(\underline{\theta})  &  =\frac{\cosh\frac{1}{2}%
\theta_{12}}{\theta_{12}-i\pi}\,u(\theta_{3})-\frac{\cosh\frac{1}{2}%
\theta_{13}}{\theta_{13}-i\pi}\,u(\theta_{2})+O(\nu)\end{align*}
which is (\ref{F3}) up to a constant. The normalization is obtained by the
form factor equation (iii)%
\begin{align*}
\operatorname*{Res}_{\theta_{12}=i\pi}F_{\bar{1}11}^{\psi}(\underline{\theta
})  &  =2i\left(  1-a(\theta_{23})\right)  F_{1}^{\psi}(\theta_{3})\\
&  =\frac{4\pi}{N}\left(  \frac{1}{\sinh\theta_{23}}-\frac{1}{\theta_{23}%
}\right)  u(\theta_{3})+O(N^{-2})
\end{align*}
where
\[
F_{\alpha\beta\gamma}^{\psi}(\underline{\theta})=\frac{i(\gamma\left(
p_{1}+p_{2}+p_{3}\right)  +m)}{8m^{2}\cosh\frac{1}{2}\theta_{12}\cosh\frac
{1}{2}\theta_{13}\cosh\frac{1}{2}\theta_{23}}F_{\alpha\beta\gamma}^{\psi\chi
}(\underline{\theta})\,.
\]
It has been used that
\begin{align*}
K(\theta,0)  &  =-2i\pi\frac{\cosh\frac{1}{2}z}{\left(  z-i\pi\right)
\sinh\frac{1}{2}z}+O(\nu)\\
\tilde{\phi}(\theta)  &  =\frac{-i\pi}{\sinh\frac{1}{2}\theta}+O(\nu)\\
F(\theta)  &  =-i\sinh\tfrac{1}{2}\theta+O\left(  \nu\right) \\
a(\theta)  &  =1+\nu i\pi\left(  \frac{1}{\sinh\theta}-\frac{1}{\theta
}\right)  \,+O\left(  \nu^{2}\right)  .
\end{align*}

\end{proof}

\subsection{1/N perturbation theory}

\label{s1overN}

Introducing the auxiliary field $\sigma(x)$ the Lagrangian (\ref{LGN}) may be
written as%
\[
\mathcal{L}^{GN}=\bar{\psi}(i\gamma\partial-\sigma)\psi-\frac{1}{2g^{2}}%
\sigma^{2}%
\]
and the Green's functions in $1/N$ expansions are obtained from the expansion
of%
\begin{align*}
Z(\xi,\bar{\xi})  &  =\int d\sigma\,\exp\left(  i\mathcal{A}_{eff}%
(\sigma)-\bar{\xi}S\xi\right) \\
\mathcal{A}_{eff}(\sigma)  &  =-i\tfrac{1}{2}N\,\operatorname*{tr}%
\,\ln(i\gamma\partial-\sigma)-\int d^{2}x\,\frac{1}{2g^{2}}\sigma^{2}%
\end{align*}
with the $\sigma$ propagator \cite{ZZ4,KW}%
\[
\tilde{\Delta}_{\sigma}(k)=\left(  \tfrac{1}{2}N\int\frac{d^{2}p}{(2\pi)^{2}%
}\operatorname*{tr}\left(  \frac{1}{\gamma p-m}\left(  \frac{1}{\gamma
(p+k)-m}-\frac{1}{m}\right)  \right)  \,\right)  ^{-1}=-\frac{4\pi i}%
{N}\,\frac{\tanh\frac{1}{2}\phi}{\phi}%
\]
where $k^{2}=-4m^{2}\sinh^{2}\frac{1}{2}\phi$. This propagator together with
the simple vertex of Fig.~\ref{fa1a} \begin{figure}[tbh]%
\[%
\begin{array}
[c]{c}%
\unitlength4mm\begin{picture}(3,4) \put(0,0){\line(0,1){2}} \put(0,0){\vector(0,1){1}} \put(0,2){\line(0,1){2}} \put(0,2){\vector(0,1){1}} \put(0,2){\dashbox{.2}(2,0){}} \put(.5,1){$k$} \put(.5,2.3){$\leftarrow$} \put(2.2,1.8){$\sigma$} \end{picture}
\end{array}
=-i
\]
\caption{The elementary vertex for the $O(N)$ Gross-Neveu model. With respect
to isospin the vertex is proportional to the unit matrix\textit{. }}%
\label{fa1a}%
\end{figure}
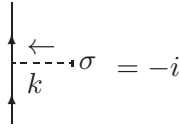yield the Feynman rules which allow to calculate general vertex
functions in the $1/N$-expansion. For example the four point vertex function
is
\begin{equation}
\tilde{\Gamma}^{(4)}{}_{AB\alpha\beta}^{DC\delta\gamma}(-p_{3},-p_{4}%
,p_{1},p_{2})=\delta_{\alpha}^{\delta}\delta_{\beta}^{\gamma}\,G_{AB}%
^{DC}(p_{2}-p_{3})-\delta_{\alpha}^{\gamma}\delta_{\beta}^{\delta}%
\,G_{AB}^{CD}(p_{3}-p_{1}) \label{a.25a}%
\end{equation}
where $A,B,C,D$ are spinor indices, $\alpha\beta\gamma\delta$ are isospin
indices and $G$ is given by the Feynman graph of Fig.~\ref{fa2}.
\begin{figure}[tbh]%
\[
G_{AB}^{DC}(k)=
\begin{array}
[c]{c}%
\unitlength5mm
\begin{picture}(8,6) \put(1,1){\line(1,2){1}} \put(1,1){\vector(1,2){.5}} \put(2,3){\line(-1,2){1}} \put(2,3){\vector(-1,2){.5}} \put(7,1){\line(-1,2){1}} \put(7,1){\vector(-1,2){.5}} \put(6,3){\line(1,2){1}} \put(6,3){\vector(1,2){.5}} \put(2,3){\dashbox{.2}(4,0){}} \put(3.7,2){$k$} \put(3.5,3.3){$\leftarrow$} \put(.5,.1){$A$} \put(7,.1){$B$} \put(7,5.3){$C$} \put(.5,5.3){$D$} \put(1.8,1.5){$p_1$} \put(5.6,1.5){$p_2$} \put(5.6,4.5){$p_3$} \put(1.8,4.5){$p_4$} \end{picture}
\end{array}
\]
\caption{The four point vertex }%
\label{fa2}%
\end{figure}
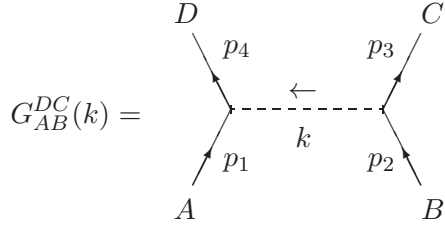Taking into account the contributions from the propagator we
obtain
\begin{equation}
G(k)=-1\otimes1\,\tilde{\Delta}_{\sigma}(k)=\frac{4\pi i}{N}1\otimes
1\,\frac{\tanh\frac{1}{2}\phi}{\phi}\,. \label{a.30a}%
\end{equation}
where the tensor product structure of the spinor matrices is obvious from
Fig.~\ref{fa2}.

\paragraph{3-particle form factor of the fundamental fermi field:}

We now calculate the three particle form factor of the fundamental fermi field
in $1/N$-expansion in lowest nontrivial order. For convenience we multiply the
field with the Dirac operator%
\[
\chi{^{\delta D}(}x{)}=i(-i\gamma\partial+m)_{D^{\prime}}^{D}\psi^{\delta
D^{\prime}}(x)
\]
and define
\[
_{out}^{~~\gamma}\langle\,p_{3}\,|\,\chi{^{\delta D}}(0)\,|\,\theta_{1}%
,\theta_{2}\,\rangle_{\alpha\beta}^{in}={F^{\eta{^{\delta D}}}\,}_{\alpha
\beta}^{\gamma}(\theta_{3};\theta_{1},\theta_{2})\,.
\]
By means of LSZ-techniques one can express the connected part in terms of the
4-point vertex function (\ref{a.25a}) in lowest order given by the Feynman
graphs of Fig.~\ref{fa3a}%
\begin{equation}
{F_{{conn}}^{\chi{^{\delta D}}}}_{\alpha\beta}^{\gamma}(\theta_{3};\theta
_{1},\theta_{2})=\bar{u}_{C}(p_{3})\left\{  \delta_{\alpha\delta}\delta
_{\beta\gamma}\,G_{AB}^{DC}(p_{2}-p_{3})-\delta_{\alpha\gamma}\delta
_{\beta\delta}\,G_{AB}^{CD}(p_{3}-p_{1})\right\}  u^{A}(p_{1})u^{B}%
(p_{2})\label{a.25b}%
\end{equation}
\begin{figure}[tbh]%
\[%
\begin{array}
[c]{c}%
\unitlength4mm
\begin{picture}(25,5) \put(2,3){\oval(4,2)} \put(2,3){\makebox(0,0){${\cal O}_\delta(0)$}} \put(.5,1){\line(1,2){.5}} \put(.5,1){\vector(1,2){.3}} \put(3.5,1){\line(-1,2){.5}} \put(3.5,1){\vector(-1,2){.3}} \put(3,4){\line(1,2){.5}} \put(3,4){\vector(1,2){.3}} \put(4,2){$\scriptstyle conn.$} \put(.1,0){$1$} \put(3.5,0){$2$} \put(3.3,5.3){$3$} \put(6.5,2.7){$=$} \put(8,1){\line(1,2){1}} \put(8,1){\vector(1,2){.7}} \put(9,3){\line(-1,2){.8}} \put(9,3){\vector(-1,2){.5}} \put(7.7,5){$\scriptstyle{\cal O}$} \put(9,3){\dashbox{.2}(2,0){}} \put(12,1){\line(-1,2){1}} \put(12,1){\vector(-1,2){.7}} \put(11,3){\line(1,2){1}} \put(11,3){\vector(1,2){.7}} \put(7.7,0){$1$} \put(12,0){$2$} \put(12.2,5.3){$3$} \put(13,2.7){$-$} \put(15,1){\line(1,1){2}} \put(15,1){\vector(1,1){.7}} \put(18,4){\line(1,1){1}} \put(18,4){\vector(1,1){.6}} \put(15.7,5){$\scriptstyle{\cal O}$} \put(16,2){\dashbox{.2}(3,0){}} \put(20,1){\line(-1,1){3.8}} \put(20,1){\vector(-1,1){1.7}} \put(14.7,0){$1$} \put(20,0){$2$} \put(19.2,5.3){$3$} \put(21,2.7){$+\cdots$} \end{picture}
\end{array}
\]
\caption{The connected part of the three particle form factor of the
fundamental fermi field in $1/N$-expansion. }%
\label{fa3a}%
\end{figure}
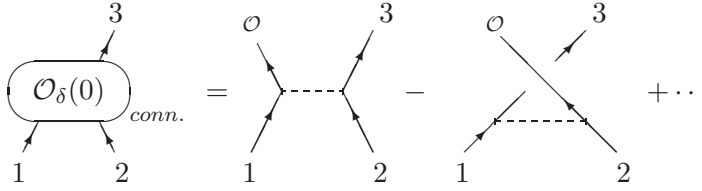where $G$ is given by Fig.~\ref{fa2} and eq.~(\ref{a.30a}) and the
spinors by $u_{\pm}(p)=\sqrt{m}e^{\mp\theta/2}$. It turns out that for
$p_{1},~p_{2}$ and $p_{3}$ on-shell several terms vanish or cancel and we
obtain up to order $1/N$ using $\bar{u}(\theta_{1})u(\theta_{2})=2m\cosh
\tfrac{1}{2}\theta_{12}$%
\begin{equation}
{F_{{conn}.}^{\chi{^{\delta D}}}}_{\alpha\beta}^{\gamma}=\frac{i\pi}%
{N}\,8m\,\bigg\{\delta_{\alpha}^{\delta}\delta_{\beta}^{\gamma}\frac
{\sinh\frac{1}{2}\theta_{23}}{\theta_{23}}\,u^{D}(p_{1})-\delta_{\alpha
}^{\gamma}\delta_{\beta}^{\delta}\frac{\sinh\frac{1}{2}\theta_{13}}%
{\theta_{13}}\,u^{D}(p_{2})\bigg\}\,.\label{a.40a}%
\end{equation}
By crossing ($\theta_{3}\rightarrow\theta_{3}+i\pi$) this gives ${F_{\alpha
\beta\bar{\gamma}}^{\chi^{\delta}}}$ and agrees with the $1/N$ expansion of
the exact form factor (\ref{F3g}).

\providecommand{\href}[2]{#2}\begingroup\raggedright\endgroup

\end{document}